\definecolor{newred}{HTML}{E66100}
\definecolor{newgreen}{HTML}{44AA99}
\definecolor{newyellow}{HTML}{E899D5}
\DeclareMathOperator*{\E}{\mathbb{E}}
\newcommand{\argmin}{\arg\!\min}
\date{September 23, 2023}
\def\defeq{\equiv}
\newcommand{\1}[1]{\mathds{1}\left[#1\right]}
\newcommand{\EE}[2]{\mathbb{E}_{#1\!\!}\left[#2\right]}
\def\E#1{\EE{\,}{#1}}
\def\bE{{\mathbf E}}
\def\bI{{\mathbf I}}
\def\bZ{{\mathbf Z}}
\def\bA{{\mathbf A}}
\def\bf{{\mathbf f}}
\def\bg{{\mathbf g}}
\def\bF{{\mathbf F}}
\def\bu{{\mathbf u}}
\def\bv{{\mathbf v}}
\def\bw{{\mathbf w}}
\def\bB{{\mathbf B}}
\def\bQ{{\mathbf Q}}
\def\bV{{\mathbf V}}
\def\bU{{\mathbf U}}
\def\bS{{\mathbf S}}
\def\bW{{\mathbf W}}
\def\bC{{\mathbf C}}
\def\bx{{\mathbf x}}
\def\be{{\mathbf e}}
\def\bb{{\mathbf b}}
\def\by{{\mathbf y}}
\def\bs{{\mathbf s}}
\def\bLambda{{\bm \Lambda}}
\def\biota{{\bm \iota}}
\def\balpha{{\bm \alpha}}
\def\bxi{{\bm \xi}}
\def\bbeta{{\bm \beta}}
\def\btheta{{\bm \theta}}
\def\bzeta{{\bm \zeta}}
\def\bSigma{{\bm \Sigma}}
\def\bGamma{{\bm \Gamma}}
\def\bTheta{{\bm \Theta}}
\def\boldm{{\mathbf m}}
\def\bvarepsilon{{\bm \varepsilon}}
\DeclareMathAlphabet{\mathcal}{OMS}{cmsy}{m}{n}
\newtheorem{remark}{Remark}
\DeclarePairedDelimiter\abs{\lvert}{\rvert}%
\DeclarePairedDelimiter\norm{\lVert}{\rVert}%
\let\oldabs\abs
\def\abs{\@ifstar{\oldabs}{\oldabs*}}
\let\oldnorm\norm
\def\norm{\@ifstar{\oldnorm}{\oldnorm*}}
\theoremstyle{plain}
\newtheorem{thm}{Theorem}
\newcommand{\vertiii}[1]{{\left\vert\kern-0.25ex\left\vert\kern-0.25ex\left\vert #1 
		\right\vert\kern-0.25ex\right\vert\kern-0.25ex\right\vert}}
{
	\theoremstyle{plain}
	
}
\def\mathcolor#1#{\@mathcolor{#1}}
\def\@mathcolor#1#2#3{%
	\protect\leavevmode
	\begingroup
	\color#1{#2}#3%
	\endgroup
}
\numberwithin{equation}{section}
\newtheorem{lem}{Lemma}
\begin{document}	
	\title{\textbf{Combining Forecasts under Structural Breaks\\ Using Graphical LASSO}}
	\author{
		Tae-Hwy Lee\footnote{Department of Economics, University of California Riverside. Email: tae.lee@ucr.edu.}\hskip 4mm \ and \hskip 2mm
		Ekaterina Seregina\footnote{Department of Economics, Colby College. Email: eseregin@colby.edu. }\hskip 8mm 
	}
	\maketitle
	\thispagestyle{empty}
	
	\begin{abstract}
		\begin{spacing}{1}
In this paper we develop a novel method of combining many forecasts based on a machine learning algorithm called Graphical LASSO (GL). We visualize forecast errors from different forecasters as a network of interacting entities and generalize network inference in the presence of common factor structure and structural breaks. 
First, we note that forecasters often use common information and hence make common mistakes, which makes the forecast errors exhibit common factor structures. We use the Factor Graphical LASSO (FGL, \cite{fgl}) to separate common forecast errors from the idiosyncratic errors and exploit sparsity of the precision matrix of the latter.
Second, since the network of experts changes over time as a response to unstable environments such as recessions, it is unreasonable to assume constant forecast combination weights. Hence, we propose Regime-Dependent Factor Graphical LASSO (RD-FGL) that allows factor loadings and idiosyncratic precision matrix to be regime-dependent. We develop its scalable implementation using the Alternating Direction Method of Multipliers (ADMM) to estimate regime-dependent forecast combination weights.
The empirical application to forecasting macroeconomic series using the data of the European Central Bank's Survey of Professional Forecasters (ECB SPF) demonstrates superior performance of a combined forecast using FGL and RD-FGL.

	\end{spacing}

		\vskip 2mm
		\noindent \textit{Keywords}: Common Forecast Errors, Regime Dependent Forecast Combination, Sparse Precision Matrix of Idiosyncratic Errors, Structural Breaks.
		\vskip 2mm
		
		\noindent \textit{JEL Classifications}: C13, C38, C55
		
		\newpage
	\end{abstract} 
	
	\newpage 
	\setlength{\baselineskip}{22pt}
	\setcounter{page}{1}
		\setstretch{1.9}
	\section{Introduction}
	A search for the best forecast combination has been an important  on-going research question in economics. \cite{CLEMEN1989} pointed out that combining forecasts is \enquote{practical, economical and useful. Many empirical tests have demonstrated the value of composite forecasting. We no longer need to justify that methodology}. However, as demonstrated by \cite{DIEBOLD2018}, there are still some unresolved issues. Despite the findings based on the theoretical grounds, equal-weighted forecasts have proved surprisingly difficult to beat. Many methodologies that seek for the best forecast combination use equal weights as a benchmark: for instance, \cite{DIEBOLD2018} develop \enquote{partially egalitarian LASSO}.
	
	The success of equal weights is partly due to the fact that the forecasters use the same set of public information to make forecasts, hence, they tend to make common mistakes. For example, in the ECB SPF of Euro-area real GDP growth, the forecasters tend to \textit{jointly} understate or overstate GDP growth. Therefore, we stipulate that the forecast errors include common and idiosyncratic components, which allows the forecast errors to move together due to the common error component. Our paper provides a simple framework to learn from analyzing forecast errors: we separate unique errors from the common errors to improve the accuracy of the combined forecast.

	Dating back to \cite{GrangerBatesWeights}, the well-known expression for the optimal forecast combination weights requires an estimator of inverse covariance (precision) matrix. Precision matrix represents a network of interacting entities, such as corporations or genes. When the data is Gaussian, the sparsity in the precision matrix encodes the conditional independence graph - two variables are conditionally independent given the rest if and only if the entry corresponding to these variables in the precision matrix is equal to zero. Graphical models are a powerful tool to directly estimate precision matrix, avoiding the step of obtaining an estimator of covariance matrix to be inverted. Prominent examples of graphical models include GL (\cite{GLASSO}) and nodewise regression (\cite{meinshausen2006}). Despite using different strategies for estimating precision matrix, all graphical models assume that precision matrix is sparse: many entries of precision matrix are zero, which is a necessary condition to consistently estimate inverse covariance. Our paper demonstrates that such assumption contradicts the stylized fact that experts tend to make common mistakes and hence the forecast errors move together through common factors. \cite{fgl} show that graphical models fail to recover the entries of a nonsparse precision matrix under the factor structure and propose FGL that combines the benefits of graphical models and factor models.
	
	At the same time, the network of experts changes over time, that is, the relationships between forecasts produced by different experts or models can change either smoothly or abruptly (e.g., as a response to an unexpected policy shock, or in the times of economic downturns). Such changes give rise to different regimes and it is important to account for changes in optimal forecast combination weights induced by structural breaks. This paper augments \cite{fgl} and develops a unified framework to generalize network inference in the presence of structural breaks. As a first extension, we model structural changes in factor loadings. As a second extension, we model structural changes in the precision matrix of the idiosyncratic component after removing common factors. We estimate regime-dependent precision matrix for forecast combination using both pre- and post-break data when forecast errors are driven by common factors. We call the proposed algorithm \textit{Regime-Dependent Factor Graphical LASSO} (RD-FGL) and develop its scalable implementation using the Alternating Direction Method of Multipliers (ADMM).
	
	Our paper makes several contributions. First, we allow the forecast errors to be highly correlated  due to the common component which is motivated by the stylized fact that the forecasters tend to jointly understate or overstate the predicted series of interest. Second, to tackle changing relationships between forecasts produced by different experts or models as a response to unstable environments, we develop a unified framework to generalize network inference in the presence of structural breaks. We propose RD-FGL that models structural changes in factor loadings and idiosyncratic precision matrix. We develop scalable implementation of RD-FGL using ADMM to estimate regime-dependent forecast combination weights. Third, an empirical application to forecasting macroeconomic series using the data of the ECB SPF shows that incorporating (i) factor structure in the forecast errors together with (ii) sparsity in the precision matrix of the idiosyncratic components and (iii) regime-dependent combination weights improves the performance of a combined forecast over forecast combinations using equal weights.
	
	We emphasize that in this paper our goal is to develop a framework for forecast combinations that incorporates structural breaks which have already occurred in the past. We neither consider the possibility of breaks over the forecast horizon (as explored in \cite{pesaran2006hierarchical}), nor study the case of the out-of-sample breaks. These scenarios are interesting extensions of this paper, however they lie outside the scope of this paper. We also emphasize that we take the individual forecasts to be combined as given and do not discuss how the forecasts are generated.
	
	The paper is structured as follows. Section 2 studies the approximate factor model for the forecast errors. Section 3 reviews FGL and contains theoretical results on the consistency of the FGL estimator for forecast combinations. Section 4 introduces Regime-Dependent graphical model and discusses its implementation using ADMM. Section 5 validates theoretical results using simulations. Section 6 studies an empirical application for macroeconomic time-series forecasting. Section 7 concludes.
	
	\textbf{Notation}. For the convenience of the reader, we summarize the notation to be used throughout the paper. Let $\mathcal{S}_p$ denote the set of all $p \times p$ symmetric matrices. For any matrix $\bC$, its $(i,j)$-th element is denoted as $c_{ij}$. Given a vector $\bu\in \mathbb{R}^d$ and a parameter $a\in \lbrack1,\infty)$, let $\norm{\bu}_a$ denote $\ell_a$-norm. Given a matrix $\bU \in\mathcal{S}_p$, let $\lambda_{\text{max}}(\bU) \defeq \lambda_1(\bU) \geq \lambda_2(\bU)\geq \ldots \geq \lambda_{\text{min}}(\bU) \defeq \lambda_p(\bU)$ be the eigenvalues of $\bU$.
	 Given a matrix $\bU \in \mathbb{R}^{p\times p}$ and parameters  $a,b\in \lbrack1,\infty)$, let $\vertiii{\bU}_{a,b}\defeq \max_{\norm{\by}_a=1}\norm{\bU\by}_{b}$ denote the induced matrix-operator norm. The special cases are $\vertiii{\bU}_1\defeq \max_{1\leq j\leq p}\sum_{i=1}^{p}\abs{u_{ij}}$ for the $\ell_1/\ell_1$-operator norm;  the operator norm ($\ell_2$-matrix norm) $\vertiii{\bU}_{2}^{2}\defeq\lambda_{\text{max}}(\bU\bU')$ is equal to the maximal singular value of $\bU$.
 Finally, $\norm{\bU}_{\text{max}}\defeq\max_{i,j}\abs{u_{ij}}$ denotes the element-wise maximum. For two sequences $a_{T,p}$ and $b_{T,p}$, we denote $a_{T,p}\asymp b_{T,p}$ if there exist constants $c_1, c_2 >0$ such that $c_1 a_{T,p} \leq b_{T,p} \leq c_2 a_{T,p}$.

	\section{Approximate Factor Models for Forecast Errors}	
	The approximate factor models for the forecasts were first considered by \cite{StockChan1999}. They modeled a panel of ex-ante forecasts of a single time-series as a dynamic factor model and found out that the combined forecasts improved on individual ones when all forecasts have the same information set (up to difference in lags). This result emphasizes the benefit of forecast combination even when the individual forecasts are not based on different information and, therefore, do not broaden the information set used by any one forecaster.
	
	In this paper, we are interested in finding the combination of forecasts which yields the best out-of-sample performance in terms of the mean-squared forecast error. We claim that the forecasters use the same set of public information to make forecasts and hence they tend to make common mistakes. Figure \ref{fig1} illustrates this statement: it shows quarterly forecasts of Euro-area real GDP growth produced by the ECB SPF from 1999Q3 to 2019Q3. As described in \cite{DIEBOLD2018}, forecasts are solicited for one year ahead of the latest available outcome: e.g., the 2007Q1 survey asked the respondents to forecast the GDP growth over 2006Q3-2007Q3. As evidenced from Figure \ref{fig1}, forecasters tend to jointly understate or overstate GDP growth, meaning that their forecast errors include common and idiosyncratic parts. Therefore, we can model the tendency of the forecast errors to move together via factor decomposition.
	\begin{figure}[!htb]
	\centering
	\includegraphics[width=.95\linewidth]{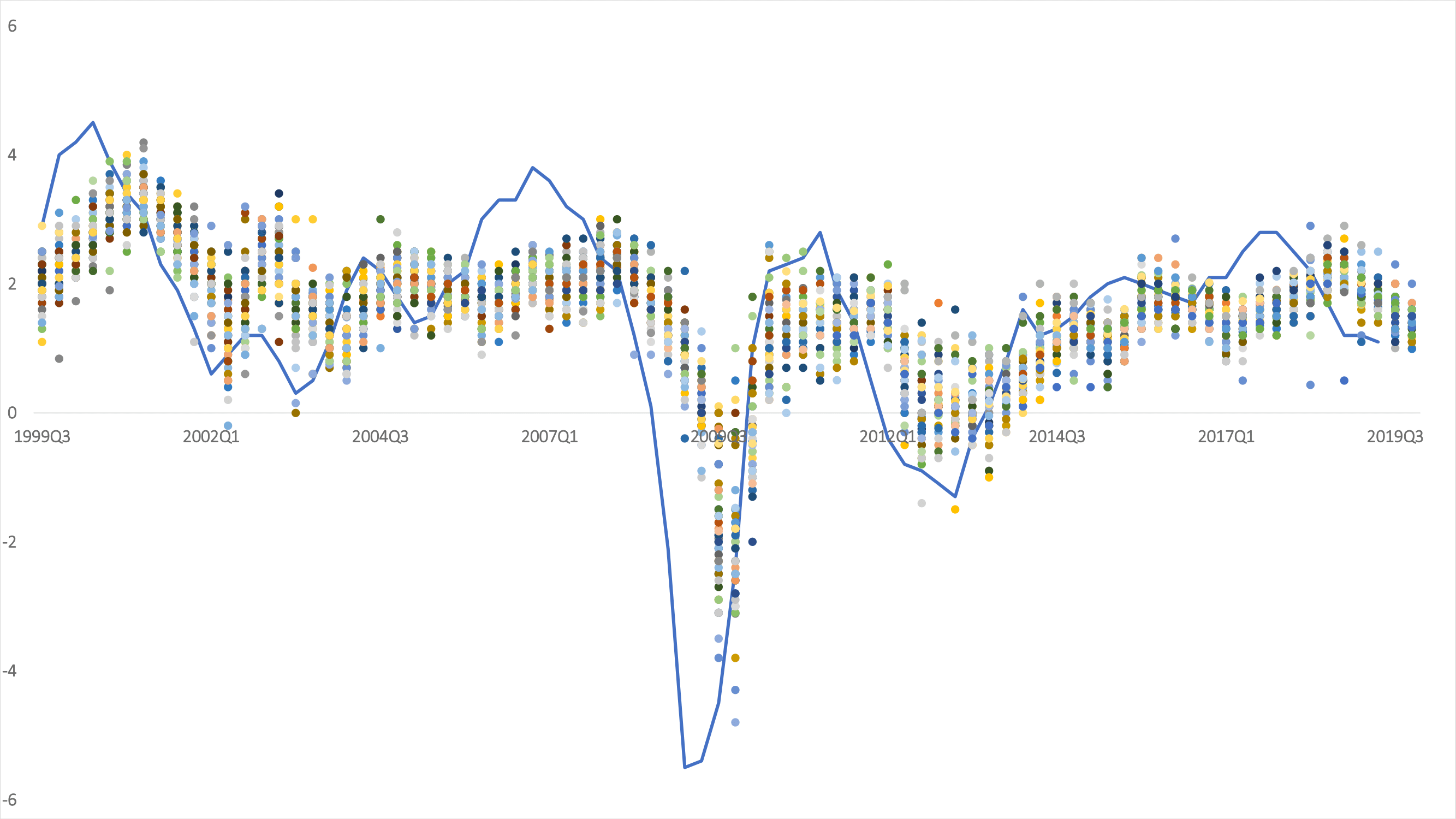}
	\bigskip
	\caption{\textbf{The European Central Bank's (ECB) Survey of Professional Forecasters (SPF)}. Each circle denotes the forecast of each professional forecaster in the SPF for the quarterly 1-year-ahead forecasts of Euro-area real GDP growth, year-on-year percentage change. Actual series is the blue line. \textit{Source: \href{https://www.ecb.europa.eu/stats/ecb_surveys/survey_of_professional_forecasters/html/index.en.html}{European Central Bank}.}}
	\bigskip
	\label{fig1}
\end{figure}

	Suppose we have $p$ competing forecasts of the univariate series $y_t$,  $t=1,\ldots,T$ and $\widetilde{\be}_t=(\widetilde{e}_{1t},\ldots,\widetilde{e}_{pt})' \sim \mathcal{N} (\boldm, \bSigma)$ is a $p \times 1$ vector of forecast errors. Note that we allow bias in the forecasts. In fact, Figure \ref{fig1} demonstrates that the individual forecasts are indeed biased. Assume that the generating process for the forecast errors follows a $q$-factor model: $\widetilde{\be}_t = \boldm + \bB\bf_{t}+\bvarepsilon_{t}$,
	where $\bf_t=(f_{1t},\ldots, f_{qt})'$ are the common factors of the forecast errors for $p$ models, $\bB$ is a $p \times q$ matrix of factor loadings, and $\bvarepsilon_t$ is the idiosyncratic component that cannot be explained by the common factors. Define demeaned forecast errors as $
	\be_t \defeq \widetilde{\be}_t - \boldm$ such that: 
	\begin{align} \label{equ1}
	&\underbrace{\be_t}_{p \times 1}=\bB \underbrace{\bf_t}_{q\times 1}+ \ \bvarepsilon_t,\quad t=1,\ldots,T,
	\end{align}

	 Unobservable factors, $\bf_{t}$, and loadings, $\bB$, are usually estimated by the principal component analysis (PCA), studied in \cite{Bai2002,Stock2002}. Strict factor structure assumes that the idiosyncratic forecast error terms, $\bvarepsilon_{t}$, are uncorrelated with each other, whereas approximate factor structure allows correlation of the idiosyncratic components (\cite{Chamberlain}).

	We use the following notations: $\E{\bvarepsilon_t\bvarepsilon'_t}=\bSigma_{\varepsilon}$, $\E{\bf_t\bf'_t}=\bSigma_{f}$, and $\E{\be_t\be'_t}=\bSigma=\bB\bSigma_{f}\bB'+ \bSigma_{\varepsilon}$. Let $\bTheta=\bSigma^{-1}$, $\bTheta_{\varepsilon}=\bSigma_{\varepsilon}^{-1}$ and $\bTheta_{f}=\bSigma_{f}^{-1}$ be the precision matrices of forecast errors, idiosyncratic and common components respectively.
	The objective function to recover factors and loadings from \eqref{equ1} is:
	\begin{align} \label{eq7}
	\min_{\bf_1,\ldots, \bf_T, \bB}\frac{1}{T}\sum_{t=1}^{T}(\be_t-\bB\bf_t)'(\be_t-\bB\bf_t),\ \text{s.t.} \ \bB'\bB=\bI_q,
	\end{align}
	where the constraint is necessary for the unique identification of factors. Fixing the value of $\bB$, we can project forecast errors $\be_t$ into the space spanned by $\bB$: $\bf_t=(\bB'\bB)^{-1}\bB'\be_t=\bB'\be_t$. When combined with \eqref{eq7}, this yields a concentrated objective function for $\bB$:
	\begin{equation} \label{equ10}
	\max_{\bB} \ \text{tr}\Big[\bB'\Big(\frac{1}{T}\sum_{t=1}^{T}\be_t\be_t'\Big) \bB\Big].
	\end{equation}
	It is well-known (see \cite{Stock2002} among others) that $\widehat{\bB}$ estimated from the first $q$ eigenvectors of $\frac{1}{T}\sum_{t=1}^{T}\be_t\be_t'$ is the solution to \eqref{equ10}.
	Given a sample of the estimated residuals $\{\widehat{\bvarepsilon}_t=\be_t-\widehat{\bB}\widehat{\bf_t}\}_{t=1}^{T}$ and the estimated factors $\{\widehat{\bf}_t\}_{t=1}^{T}$, let $\widehat{\bSigma}_{\varepsilon} = (1/T)\sum_{t=1}^{T}\widehat{\bvarepsilon}_t\widehat{\bvarepsilon}_t'$ and $\widehat{\bSigma}_{f}=(1/T)\sum_{t=1}^{T}\widehat{\bf}_t\widehat{\bf}_t'$ be the sample counterparts of the covariance matrices.
	
	Moving forward to the forecast combination exercise, suppose we have $p$ competing forecasts, $\widehat{\by}_{t}=(\hat{y}_{1,t},\ldots,\hat{y}_{p,t})'$, of the variable $y_t$, $t=1,\ldots,T$. The forecast combination is defined as $\widehat{y}_{t}^{c}=\bw'\widehat{\by}_{t}$, where $\bw$ is a $p \times 1$ vector of weights. Define a measure of risk $\text{MSFE}(\bw, \bSigma)=\bw'\bSigma\bw$. As shown in \cite{GrangerBatesWeights}, the \textit{optimal} forecast combination minimizes the MSFE of the combined forecast error:
	\begin{equation} \label{equ11}
	 \min_{\bw} \text{MSFE}=\min_{\bw}\E{\bw'\be_t\be^{'}_{t}\bw}=\min_{\bw} \bw'\bSigma\bw, \ \text{s.t.} \ \bw'\biota_p =1,
	\end{equation}
	where $\biota_p$ is a $p\times 1$ vector of ones.\footnote{As noted in \cite{timmerman_handbook_of_forecasting}, global mean-variance optimization problem in finance is similar to the forecast combination problem. Combined forecasts can be viewed as the portfolio and the source of risk reflects incomplete information about the target variable and model
		misspecification possibly due to non-stationarities in the underlying data generating process.} The solution to \eqref{equ11} yields a $p\times 1$ vector of the optimal forecast combination weights:
	\begin{equation} \label{eq13}
	\bw=\frac{\bTheta\biota_p}{\biota_p'\bTheta\biota_p}.
	\end{equation}
 	If the true precision matrix is known, the equation \eqref{eq13} guarantees to yield the optimal forecast combination. In reality, one has to estimate $\bTheta$. As pointed out by \cite{smith2009simple}, when the estimation uncertainty of the weights is taken into account, there is no guarantee that the \enquote{optimal} forecast combination will be better than the equal weights or even improve the individual forecasts. Define $a = \biota'_{p}\bTheta\biota_p/p$ and $\widehat{a} = \biota'_{p}\widehat{\bTheta}\biota_p/p$. We can write $\abs{\frac{\text{MSFE}(\widehat{\bw},\widehat{\bSigma})}{\text{MSFE}(\bw,\bSigma)} -1} = \abs{ \frac{\hat{a}^{-1} }{a^{-1}}-1}=\frac{\abs{a-\hat{a}}}{\abs{\hat{a}}}$ and $\norm{\widehat{\bw}-\bw}_1 = \Big[(a\widehat{\bTheta}\biota_p) - (a\bTheta\biota_p) + (a\bTheta\biota_p) - (\hat{a}\bTheta\biota_p)   \Big]/p \cdot (\hat{a}a)$.
	Therefore, in order to control the estimation uncertainty in the MSFE and combination weights, one needs to obtain a consistent estimator of the precision matrix $\bTheta$. More details are discussed in Section 3 and Theorem \ref{theor1}.
	
	\section{Factor Graphical LASSO for Forecast Errors}	
	Since our interest is in constructing weights for the forecast combination, our goal is to estimate a precision matrix of the forecast errors. This brings us to consider a family of graphical models, which have evolved from the connection between partial correlations and the entries of an adjacency matrix. The adjacency matrix has zero or one in its entries, with a zero entry indicating that two variables are independent conditional on the rest. The adjacency matrix is sometimes referred to as a ``graph". In graphical models, each vertex represents a random variable, and the graph visualizes the joint distribution of the entire set of random variables.
	\textit{Sparse graphs} have a relatively small number of edges. Weighted Graphical Lasso (GL) procedure (\cite{GLASSO}) described in Supplemental Appendix \ref{appendixA0theor} is a representative member of graphical models family.
	
	Before estimating precision matrix of forecast errors, $\bTheta$, we first obtain estimates of factors, $\widehat{\bf}_t$, and factor loadings, $\widehat{\bB}$, using PCA. Second, we obtain $\widehat{\bSigma}_{f}=\frac{1}{T}\sum_{t=1}^{T}\widehat{\bf}_{t}\widehat{\bf}_{t}^{'}$, $\widehat{\bTheta}_{f}=\widehat{\bSigma}_{f}^{-1}$, $\widehat{\bvarepsilon}_t = \be_t-\widehat{\bB}\widehat{\bf_t}$, and $\widehat{\bSigma}_{\varepsilon}=\frac{1}{T}\sum_{t=1}^{T}\widehat{\bvarepsilon}_{t}\widehat{\bvarepsilon}_{t}^{'}$. Third, we note that when common factors are present across the forecast errors, the precision of forecast errors, $\bTheta$, cannot be sparse because all pairs of the forecast errors are partially correlated given other forecast errors through the common factors. Therefore, instead of imposing sparsity assumption on $\bTheta$ we require sparsity of the precision matrix of the idiosyncratic errors, $\bTheta_{\varepsilon}$. 
	
	 Once we condition on the common components, it is sensible to assume that many remaining partial correlations of $\bvarepsilon_{t}$ will be negligible and thus $\bTheta_{\varepsilon}$ is sparse. Hence, it is estimated with the Weighted Graphical Lasso penalty:
	\begin{align} \label{e7.6_main}
		&\widehat{\bTheta}_{\varepsilon,\tau}=\argmin_{\bTheta_{\varepsilon}=\bTheta'_{\varepsilon}}\text{tr}(\bW_{\varepsilon}\bTheta_{\varepsilon})-\log\det(\bTheta_{\varepsilon})+\tau\sum_{i\neq j}\widehat{\gamma}_{\varepsilon,ii}\widehat{\gamma}_{\varepsilon,jj}\abs{\theta_{\varepsilon,ij}},
	\end{align}
	initialized with $\bW_{\varepsilon}=\widehat{\bSigma}_{\varepsilon}+\tau\bI$, where  $\widehat{\gamma}_{\varepsilon,ii}$ is the $(i,i)$-th element of $\widehat{\bGamma}_{\varepsilon}^2\defeq \textup{diag}(\bW_{\varepsilon})$. The subscript $\tau$ in $\widehat{\bTheta}_{\varepsilon,\tau}$ means that the solution of the optimization problem in \eqref{e7.6_main} will depend upon the choice of the tuning parameter $\tau$. In order to simplify notation, we will omit the subscript $\tau$.
	
	Finally we put all estimates together using the Sherman-Morrison-Woodbury formula to estimate the precision of forecast errors:
	\begin{equation}\label{3.11_main}
		\widehat{\bTheta}=\widehat{\bTheta}_{\varepsilon}-\widehat{\bTheta}_{\varepsilon}\widehat{\bB}\lbrack\widehat{\bTheta}_f+\widehat{\bB}'\widehat{\bTheta}_{\varepsilon}\widehat{\bB}\rbrack^{-1}\widehat{\bB}'\widehat{\bTheta}_{\varepsilon}.
	\end{equation}
	
	The aforementioned procedure introduced in \cite{fgl} is called Factor Graphical LASSO (FGL).\footnote{Instead of decomposing covariance matrix into low-rank and idiosyncratic components, $\E{\be_t\be'_t}=\bSigma=\bB\bSigma_{f}\bB'+ \bSigma_{\varepsilon}$, \cite{shi2020l2} regularize $\bSigma$ directly with the factor structure (more broadly, latent group structure) in mind. FGL approach is different since it focuses on estimating precision matrix directly which is the main focus of the theoretical analysis in \cite{fgl}.
	} It is summarized in Supplemental Appendix \ref{appendix_fgl}, where we also discuss the choice of the tuning parameter $\tau$ in \eqref{e7.6_main}.

	We can use $\widehat{\bTheta}$ to estimate the forecast combination weights $	\widehat{\bw}=\bTheta\biota_p/\biota_p'\widehat{\bTheta}\biota_p$. This approach allows to extract the benefits of modeling common movements in forecast errors, captured by a factor model, and the benefits of using many competing forecasting models that give rise to a high-dimensional precision matrix, captured by a graphical model.

	Let us now examine the asymptotic properties of FGL. We first introduce some terminology and notations. Let $A\in \mathcal{S}_p$. Define the following set for $j=1,\ldots,p$:
	\begin{align}\label{equ84}
		&D_j(A)\defeq\{ i:A_{ij}\neq 0,\ i\neq j\}, \quad d_j(A)\defeq\text{card}(D_j(A)),\quad d(A)\defeq\max_{j=1,\ldots,p}d_j(A),
	\end{align}
	where $d_j(A)$ is the number of edges adjacent to the vertex $j$ (i.e., the \textit{degree} of vertex $j$), and $d(A)$ measures the maximum vertex degree. Define $S(A)\defeq \bigcup_{j=1}^{p}D_j(A)$ to be the overall off-diagonal sparsity pattern, and $s(A)\defeq \sum_{j=1}^{p}d_j(A)$ is the overall number of edges contained in the graph. Note that $\text{card}(S(A)) \leq s(A)$: when $s(A)=p(p-1)/2$ this would give a fully connected graph.
	We now list the assumptions on the model \eqref{equ1}:
	\begin{enumerate}[\textbf{({A}.1)}]
		\item \label{A1} (Spiked covariance model)
		Assume that (i) As $p \rightarrow \infty$, $\lambda_1(\bSigma)>\lambda_2(\bSigma)>\ldots>\lambda_q(\bSigma)\gg \lambda_{q+1}(\bSigma)\geq \ldots \geq \lambda_p(\bSigma) > 0$, where $\lambda_j(\bSigma)=\mathcal{O}(p)$ for $j \leq q$, while the non-spiked eigenvalues are bounded, that is, $c_0 \leq \lambda_j(\bSigma) \leq C_0$, $j > q$ for constants $c_0, C_0 > 0$.
		And assume that (ii) $\biota'_p\bTheta\biota_p/p \geq c >0$, where $c>0$ is a positive constant.
	\end{enumerate}
	\begin{enumerate}[\textbf{({A}.2)}]
		\item \label{A2}(Pervasive factors)
		There exists a positive definite $q \times q$ matrix $\breve{\bB}$ such that\\ $\vertiii{p^{-1}\bB'\bB-\breve{\bB}}_{2}\rightarrow 0$ and $\lambda_{\text{min}}(\breve{\bB})^{-1}=\mathcal{O}(1)$  as $p \rightarrow \infty$.
	\end{enumerate}
	We also impose strong mixing condition. Let $\mathcal{F}_{-\infty}^{0}$ and $\mathcal{F}_{T}^{\infty}$ denote the $\sigma$-algebras that are generated by $\{(\bf_t,\bvarepsilon_{t}):t\leq 0\}$ and $\{(\bf_t,\bvarepsilon_{t}):t\geq T\}$ respectively. Define the mixing coefficient $		\alpha(T)=\sup_{A\in \mathcal{F}_{-\infty}^{0}, B \in \mathcal{F}_{T}^{\infty}}\abs{\Pr{A}\Pr{B}-\Pr{AB}}$.
	\begin{enumerate}[\textbf{({A}.3)}]
		\item \label{A3} (Strong mixing) There exists $r_3>0$ such that $3r_{1}^{-1}+1.5r_{2}^{-1}+3r_{3}^{-1}>1$, and $C>0$ satisfying, for all $T\in \mathbb{Z}^{+}$, $\alpha(T)\leq \exp (-CT^{r_3})$.
	\end{enumerate}	
	Assumption \ref{A1} divides the eigenvalues into the diverging and bounded ones. This assumption is satisfied by the factor model with pervasive factors, which is stated in Assumption \ref{A2}. We say that a factor is pervasive in the sense that it has non-negligible effect on a non-vanishing proportion of individual time-series. Part (ii) of Assumption \ref{A1} is needed for consistent estimation of the optimal forecast combination weights.  Assumptions \ref{A1}-\ref{A2} are crucial for estimating a high-dimensional factor model: they ensure that the space spanned by the principal components in the population level $\bSigma$ is close to the space spanned by the columns of the factor loading matrix $\bB$. Assumption \ref{A3} is a technical condition which is needed to consistently estimate the factors and loadings. Assumptions \ref{A1}(i), \ref{A2}, and \ref{A3} are standard assumptions and are used in \cite{fan2013POET}.
	
	Let $\bLambda_q = \text{diag}(\lambda_{1},\ldots,\lambda_q)$ be a matrix of $q$ leading eigenvalues of $\bSigma$, and $\bV_q = (\bv_1,\ldots,\bv_q)$ is a $p\times q$ matrix of their corresponding leading eigenvectors. Define $\widehat{\bSigma}, \widehat{\bLambda}_q,\widehat{\bV}_q$ to be the estimators of $\bSigma,\bLambda_q,\bV_q$. We further let $\widehat{\bLambda}_q=\text{diag}(\hat{\lambda}_1,\ldots,\hat{\lambda}_q)$ and $\widehat{\bV}_q=(\hat{\bv}_1,\ldots,\hat{\bv}_q)$ to be constructed by the first $q$ leading empirical eigenvalues and the corresponding eigenvectors of $\widehat{\bSigma}$ and $\widehat{\bB}\widehat{\bB}'=\widehat{\bV}_q\widehat{\bLambda}_q\widehat{\bV}_{q}^{'}$. Similarly to \cite{fan2018elliptical}, we require the following bounds on the componentwise maximums of the estimators:
	\begin{enumerate}[\textbf{({B}.1)}]
		\item \label{B1} $\norm{\widehat{\bSigma}-\bSigma}_{\text{max}}=\mathcal{O}_P(\sqrt{\log p/T})$,
	\end{enumerate}
	
	\begin{enumerate}[\textbf{({B}.2)}]
		\item \label{B2} $\norm{(\widehat{\bLambda}_q-\bLambda_q)\bLambda_{q}^{-1}}_{\text{max}}=\mathcal{O}_P(\sqrt{\log p/T})$,
	\end{enumerate}
	\begin{enumerate}[\textbf{({B}.3)}]
		\item \label{B3} $\norm{\widehat{\bV}_q-\bV_q}_{\text{max}}=\mathcal{O}_P(\sqrt{\log p/(Tp)})$.
	\end{enumerate}
	Assumptions \ref{B1}-\ref{B3} are needed in order to ensure that the first $q$ principal components are approximately the same as the columns of the factor loadings. The estimator $\widehat{\bSigma}$ can be thought of as any ``pilot" estimator that satisfies \ref{B1}. For sub-Gaussian distributions, sample covariance matrix, its eigenvectors and eigenvalues satisfy \ref{B1}-\ref{B3}.
	
	In addition, the following structural assumption on the model is imposed:
	\begin{enumerate}[\textbf{({C}.1)}]
		\item \label{C1} $\norm{\bSigma}_{\text{max}}=\mathcal{O}(1)$ and $\norm{\bB}_{\text{max}}=\mathcal{O}(1)$.
	\end{enumerate}
	Note that Assumptions \ref{B1}-\ref{B3} and \ref{C1} are standard assumptions and are used in \cite{fan2018elliptical}.
	
	To study the properties of the combination weights and MSFE, we first need to establish the convergence properties of precision matrix produced by Algorithm \ref{alg2}. Let $\omega_{T}\defeq \sqrt{\log p/T} +1/\sqrt{p}$. Also, let $s(\bTheta_{\varepsilon})=\mathcal{O}_P(s_T)$ for some sequence $s_T\in(0,\infty)$ and $d(\bTheta_{\varepsilon})=\mathcal{O}_P(d_T)$ for some sequence $d_T\in(0,\infty)$. The deterministic sequences $s_T$ and $d_T$ will control the sparsity $\bTheta_{\varepsilon}$ for FGL. Note that $d_T$ can be smaller than or equal to $s_T$. 
	
	Let $\varrho_{T}$ be a sequence of positive-valued random variables such that $\varrho_{T}^{-1}\omega_{T}\xrightarrow{P}0$ and $\varrho_{T}d_Ts_T\xrightarrow{P}0$, with $\tau \asymp \omega_{T}$ (where $\tau$ is the tuning parameter for the FGL in \eqref{e7.6}). \cite{fgl} show that under the Assumptions \ref{A1}-\ref{A3}, \ref{B1}-\ref{B3} and \ref{C1}, $\vertiii{\widehat{\bTheta}-\bTheta}_{1}=\mathcal{O}_P(\varrho_{T}d_Ts_T) = o_P(1)$ and $\vertiii{\widehat{\bTheta}-\bTheta}_{2}=\mathcal{O}_P(\varrho_{T}s_T) = o_P(1)$ for FGL.
	
	Having established the convergence rates for precision matrix, we now study the properties of the combination weights and the resulted MSFE.
	\begin{thm} \label{theor1}
		Assume \ref{A1}-\ref{A3}, \ref{B1}-\ref{B3}, and \ref{C1} hold. FGL consistently estimates forecast combination weights and $\text{MSFE}(\widehat{\bw},\widehat{\bSigma})$:
		\begin{enumerate}
			\item[(i)]If $\varrho_{T}d_{T}^{2}s_T\xrightarrow{\text{P}}0$, $\norm{\widehat{\bw}-\bw}_1=\mathcal{O}_P\Big(\varrho_{T}d_{T}^2s_T\Big)=o_P(1)$.
			\item[(ii)] If $\varrho_{T}d_Ts_T\xrightarrow{\text{P}}0$, $	\abs{\frac{\text{MSFE}(\widehat{\bw},\widehat{\bSigma})}{\text{MSFE}(\bw,\bSigma)} -1}=\mathcal{O}_P(\varrho_{T}d_Ts_T )=o_P(1)$.
		\end{enumerate}
	\end{thm}
	The proof of Theorem \ref{theor1} can be found in Supplementary Appendix \ref{appendixAtheor}. Note that the rates of convergence for MSFE and precision matrix $\bTheta$, which was derived in \cite{fgl}, are the same and both are faster than the combination weight rates. In contrast to the classical graphical model in Algorithm \ref{alg1a}, the convergence properties of which were examined by \cite{Sara2018} among others, the rates in Theorem \ref{theor1} depend on the sparsity of $\bTheta_{\varepsilon}$ rather than of $\bTheta$. This means that instead of assuming that many partial correlations of forecast errors $\be_{t}$ are negligible, which is not realistic under the factor structure, we impose a milder restriction requiring many partial correlations of $\bvarepsilon_{t}$ to be negligible once the common components have been taken into account.
	\section{RD-FGL for Forecast Errors}
	There are two streams of literature that study time-varying networks. The first one models dynamics in the precision matrix locally. \cite{Zhou2010} develop a nonparametric method for estimating time-varying graphical structure for multivariate Gaussian distributions using an $\ell_1$-penalized log-likelihood. They find out that if the covariances change smoothly over time, the covariance matrix can be estimated well in terms of predictive risk even in high-dimensional problems. \cite{Kolar2018} introduce nonparanormal graphical models that allow to model high-dimensional heavy-tailed systems and the evolution of their network structure. They show that the estimator consistently estimates the latent inverse Pearson correlation matrix. The second stream of literature allows the network to vary with time by introducing two different frequencies. \cite{tvgl} study time-varying Graphical LASSO with smoothing evolutionary penalty.
	
	We augment the framework in Section 3 to account for regime switching by modeling the change in precision matrix due to $N$ structural breaks. Define $n_j \defeq t_j - t_{j-1}$ to be the sample between the $j$-th and $(j-1)$-th break points, where $j=1,\ldots, N+1$, $\sum_{j=1}^{N+1}n_j = T$, $t_0 = 0$, $N\leq T$.
	\subsection{Regime-Dependent Factor Loadings}
	Macroeconomic and financial datasets typically span a long time period, hence, the assumption of time-invariant factor loadings is restrictive. As a first extension to FGL, we model structural changes in factor loadings using a framework similar to \cite{su2017time}. For now assume a single known break $N=1$ which occurs at $T_1$.\footnote{The possible presence of break over the forecast horizon, as explored in \cite{pesaran2006hierarchical}, would result in common forecast errors of all forecasting models, which can be captured by using the factor model. This would be another scenario of common forecast errors as motivated by Figure \ref{fig1}.} Write equation \eqref{equ1} as:	
	\begin{align} \label{equ1_tv}
		& e_{it}=\underbrace{\bb'_{i}}_{1\times q} \underbrace{\bf_t}_{q\times 1}+ \ \varepsilon_{it},\quad t=1,\ldots,T, \ i=1,\ldots,p.
	\end{align}
To estimate $\{\bb_{i}\}_{i=1}^{p}$ and $\{\bf_t\}_{t=1}^{T}$, we can consider the following weighted least squares problem:
\begin{equation}\label{wls}
	\min_{\{\bb_{i}\}_{i=1}^{p}, \{\bf_t\}_{t=1}^{T}} (pT)^{-1}\sum_{i=1}^{p}\sum_{t=1}^{T}\big[ e_{it} - \bb'_{i}\bf_t  \big]^2 K_{\gamma t},
\end{equation}
subject to certain identification restrictions to be specified later on. Here, $K_{\gamma t} = \gamma \1{t \leq T_1} + \1{t > T_1}$ is a discrete kernel as in \cite{li2013categorical} with $\gamma \in [0,1]$. Since more recent information is usually more relevant to forecasting, such kernel-weight estimator gives weight 1 to post-break observations and weight $\gamma$ to pre-break observations.\footnote{Adjusting the parameter to control the degree to which pre-break data are discounted by the model is useful for applied researchers. Bayesian framework with hierarchical priors provides an alternative way to combine the sample information contained in the objective function with prior information about the values of the model parameters and the relations among them (see \cite{pastor2001hierarchical,pesaran2006hierarchical}).}
 
Define the $T \times p$ matrices $\bE(\gamma) = \Big(\be_1(\gamma),\ldots, \be_p(\gamma)   \Big)$, $\mathcal{E}(\gamma) = \Big(\bvarepsilon_1(\gamma),\ldots, \bvarepsilon_p(\gamma)   \Big)$, where\\ $\be_{i}(\gamma) = \Big(K_{\gamma 1}^{1/2}e_{i1},\ldots, K_{\gamma T}^{1/2}e_{iT}\Big)'$ and $\bvarepsilon_{i}(\gamma) = \Big(K_{\gamma 1}^{1/2}\varepsilon_{i1},\ldots, K_{\gamma T}^{1/2}\varepsilon_{iT}\Big)'$. Also, let\\ $\bF(\gamma) = \Big(K_{\gamma 1}^{1/2}\bf_{1},\ldots, K_{\gamma T}^{1/2}\bf_{T}\Big)'$ be a $T \times q$ matrix collecting factors. In matrix notation, the transformed model in \eqref{equ1_tv} can be written as $\bE(\gamma) = \bF(\gamma)\bB' + \mathcal{E}(\gamma)$,
where $\bB= (\bb_{1},\ldots, \bb_{p})'$ is a $p \times q$ matrix of factor loadings.

As shown in \cite{su2017time} for the continuous kernel, the minimization problem in \eqref{wls} reduces to:
\begin{align}\label{wls-modified}
	&\min_{\bF(\gamma),\bB}  \text{tr} \Bigg[ \Big( \bE(\gamma) -  \bF(\gamma)\bB' \Big) \Big( \bE(\gamma) -  \bF(\gamma)\bB'  \Big)'  \Bigg]\\
	& \text{s.t.} \ \bF'(\gamma) \bF(\gamma) / T = \bI_q \ \text{and} \ \bB' \bB = \text{diagonal matrix}. \nonumber
\end{align}
The problem in \eqref{wls-modified} is the conventional PCA problem. The estimated factor matrix $\widehat{\bF}(\gamma) = \Big(K_{\gamma 1}^{1/2}\widehat{\bf}_{1},\ldots, K_{\gamma T}^{1/2}\widehat{\bf}_{T}\Big)'$ is $\sqrt{T}$ times eigenvectors corresponding to the $q$ largest eigenvalues of $\bE(\gamma)\bE'(\gamma)$, arranged in descending order, and $\widehat{\bB}'(\gamma) = (\widehat{\bF}(\gamma)\widehat{\bF}'(\gamma))^{-1}\widehat{\bF}'(\gamma)\bE(\gamma) = \widehat{\bF}'(\gamma) \bE(\gamma)/T $ are the estimators of the corresponding time-varying factor loadings, where $\widehat{\bB}(\gamma) = (\widehat{\bb}_{1}(\gamma),\ldots, \widehat{\bb}_{p}(\gamma))'$ is $p \times q$.

Since the estimator $\widehat{\bF}(\gamma)$ is only consistent up to a rotation, we use a two-stage estimation procedure to obtain a consistent estimator (\cite{su2017time}). Based on the consistent estimators of $\bb_{i}$'s obtained from the first stage, consistent estimators of $\bf_{t}(\gamma)$ can be obtained by considering the following least squares problem $\widehat{\bf}_t(\gamma) = \argmin_{\bf_t} \sum_{i=1}^{p} \Big[e_{it} - \widehat{\bb}^{'}_{i}(\gamma)\bf_{t}  \Big]^2$ which yields the solution $\widehat{\bf}_t(\gamma) = \Big( \sum_{i=1}^{p} \widehat{\bb}_{i}(\gamma) \widehat{\bb}^{'}_{i}(\gamma) \Big)^{-1}\Big(  \sum_{i=1}^{p} \widehat{\bb}_{i}(\gamma) e_{it} \Big)$.

As in \cite{su2017time}, we assume that $\E{\bf_t\bf'_t}$ is homogeneous over $t$. This assumption is not restrictive, since if $\E{\bf_t\bf'_t} = \bSigma_{f,t}$, we can rewrite the common component as $\bb'_{i}\bf_t = \Big( \bSigma_{f}^{-1/2}\bSigma_{f,t}^{1/2}\bb_{i} \Big)' \bSigma_{f}^{1/2} \bSigma_{f,t}^{-1/2}\bf_{t} = \bb^{*'}_{i}\bf^{*'}_{t}$, where $\bb^{*}_{i} = \bSigma_{f}^{-1/2}\bSigma_{f,t}^{1/2}\bb_{i} $, and $\bf^{*}_{t} = \bSigma_{f}^{1/2} \bSigma_{f,t}^{-1/2}\bf_{t}$ satisfies $\E{\bf^{*}_{t}\bf^{*'}_{t}} =\bSigma_{f} $ for each $t$. 

To choose the optimal tuning parameter $\gamma$ in \eqref{wls}, we use the cross-validation and solve the following minimization problem:
\begin{equation} \label{cv_gamma}
	\min_{\gamma} \text{CV}(\gamma) = \frac{1}{p(T-T_1)} \sum_{i=1}^{p}\sum_{s=T_1+1}^{T}\big[ e_{is} - \widehat{\bb}^{' (-s)}_{i}(\gamma) \widehat{\bf}^{(-s)}_{s}(\gamma)  \big]^2,
\end{equation}
where $\widehat{\bb}^{(-s)}_{i}(\gamma)$ and $\widehat{\bf}^{(-s)}_{s}(\gamma)$ are estimated by leaving the $s$-th time series observation out of the PCA procedure. 
\begin{remark}
	The procedure for estimating regime-dependent factor loadings can be easily extended to the case when the number of breaks is greater than 1 ($N>1$). The kernel in \eqref{wls} would be adjusted accordingly $K_{\gamma_j t} = \gamma_j \1{t \leq T_j} + \1{t > T_{N}}$, where $j=1,\ldots, N+1$. To estimate $\{\gamma_j\}_{j=1}^{N+1}$ we use cross-validation as in \eqref{cv_gamma} consequently applied to each two periods separated by a break.
\end{remark} 
	\subsection{Regime-Dependent Idiosyncratic Precision Matrix}
	
	 As a second  extension to FGL, we model structural changes in the precision matrix of the idiosyncratic component. Let $\bSigma_{\varepsilon,j}$ and $\bSigma_j$ be covariance matrices of idiosyncratic part and forecast errors in regime $j$. Define the corresponding precision matrices to be $\bTheta_{\varepsilon,j}\defeq\bSigma_{\varepsilon,j}^{-1}$ and $\bTheta_j \defeq \bSigma_{j}^{-1}$. Similarly to the previous subsection, without loss of generality we assume $\bSigma_{f_j}=\bSigma_f$ for all regimes $j$.
	
	 Let $\widehat{\bSigma}_{\varepsilon,j}=\frac{1}{n_j}\sum_{k=1}^{n_j}\widehat{\bvarepsilon}_{j,k}\widehat{\bvarepsilon}_{j,k}'$. To model dynamics in $\{\bTheta_{\varepsilon,j}\}_{j=1}^{N+1}$ we use the following optimization problem:
	\begin{align} \label{eq2}
		\min_{\{\bTheta_{\varepsilon,j}\}_{j=1}^{N+1}}\sum_{j=1}^{{N+1}}n_j\Big[\text{tr}\Big(\widehat{\bSigma}_{\varepsilon,j}\bTheta_{\varepsilon,j}\Big)-\log\det\bTheta_{\varepsilon,j}		\Big] 
		+\alpha\norm{\bTheta_{\varepsilon,j}}_{\text{od},1} 
		+\beta\sum_{j=2}^{{N+1}}\psi(\bTheta_{\varepsilon,j}-\bTheta_{\varepsilon,j-1}),
	\end{align}
	where the penalty for the off-diagonal (od) elements is $\norm{\bTheta_{\varepsilon,j}}_{\text{od},1}  = \sum_{l\neq q}\widehat{\gamma}_{\varepsilon,ll,j}\widehat{\gamma}_{\varepsilon,qq,j}\abs{\theta_{\varepsilon,lq,j}}$, $\widehat{\gamma}_{\varepsilon,ll,j}$ is the $(l,l)$-th element of $\widehat{\bGamma}_{\varepsilon,j}^2\defeq \textup{diag}(\widehat{\bSigma}_{\varepsilon,j})$ and $\theta_{\varepsilon,lq,j}$ is the $lq$-th element of matrix $\bTheta_{\varepsilon,j}$. Figure \ref{timeseries} visualizes dynamics of the precision matrix.	
	\vspace {20pt}
	\begin{figure}[!htb]
		\begin{center}
			\begin{tikzpicture}[snake=zigzag, line before snake = 5mm, line after snake = 5mm, box/.style = {inner xsep=0pt, outer sep=0pt,text width=0.15\linewidth,	align=left, font=\scriptsize}]
				\draw[line join=bevel] (0,0) -- (3,0);
				\draw[line join=bevel] (3,0) -- (4,0);
				\draw[snake] (4,0) -- (7,0);
				\draw[line join=bevel] (7,0) -- (8,0);
				
				\foreach \x in {0,3,8}
				\draw (\x cm,3pt) -- (\x cm,-3pt);
				
				\draw (0,0) node[below=3pt] {\scriptsize {$ t_1 $}} node[above=10pt, box] {\scriptsize $
					\text{tr}\Big(\widehat{\bSigma}_{\varepsilon,1}\bTheta_{\varepsilon,1}\Big)\newline-\log\det\bTheta_{\varepsilon,1} \newline+\alpha\norm{\bTheta_{\varepsilon,1}}_{\text{od},1}$};
				\draw (1,0) node[below=10pt] {\hspace{30pt}\scriptsize$ \beta\psi(\bTheta_{\varepsilon,2}-\bTheta_{\varepsilon,1}) $} node[above=3pt] {$  $};
				\draw (2,0) node[below=3pt] {$  $} node[above=3pt] {$  $};
				\draw (3,0) node[below=3pt] {\scriptsize {$ t_2 $}} node[above=10pt, box] {\scriptsize $
					\text{tr}\Big(\widehat{\bSigma}_{\varepsilon,2}\bTheta_{\varepsilon,2}\Big)\newline-\log\det\bTheta_{\varepsilon,2} \newline+\alpha\norm{\bTheta_{\varepsilon,2}}_{\text{od},1}$};
				\draw (4,0) node[below=10pt] {\hspace{10pt} \scriptsize$ \beta\psi(\bTheta_{\varepsilon,3}-\bTheta_{\varepsilon,2}) $ } node[above=3pt] {$  $};
				\draw (5,0) node[below=3pt] {$  $} node[above=3pt] {$  $};	
				\draw (6,0) node[below=3pt] {$ $} node[above=3pt] {$  $};
				\draw (7,0) node[below=10pt] {\hspace{-4pt} \scriptsize$ \beta\psi(\bTheta_{\varepsilon,N+1}-\bTheta_{\varepsilon,N}) $} node[above=3pt] {$  $};
				\draw (8,0) node[below=3pt] {\scriptsize {$ t_{N+1} $}} node[above=10pt, box] {\scriptsize $
					\text{tr}\Big(\widehat{\bSigma}_{\varepsilon,N+1}\bTheta_{\varepsilon,N+1}\Big)\newline-\log\det\bTheta_{\varepsilon,N+1} \newline+\alpha\norm{\bTheta_{\varepsilon,N+1}}_{\text{od},1}$};
			\end{tikzpicture}
			\caption{\textbf{Change of precision matrix over time:} $\beta$ is the penalty that enforces temporal consistency and $\psi$ is a convex penalty function.}
			\label{timeseries}
		\end{center}
	\end{figure}

	The optimization problem in \eqref{eq2} has two tuning parameters: $\alpha$, which determines the sparsity level of the network, and $\beta$, which controls the strength of resemblance between two neighboring precision estimators.\footnote{\cite{pesaran2006hierarchical} use Bayesian framework to handle parameter instability using hierarchical priors. As pointed out in their paper, ``intuition for the use of hierarchical priors comes from the shrinkage
	literature, since one can think of the parameters within the individual regimes as being shrunk
			towards a set of the so-called hyperparameters that characterize the ``top” layer of the hierarchy" (p. 1059).}
	In simulations and the empirical application we use the following procedure for tuning $\alpha$ and $\beta$: first, we set a grid of values $(\alpha,\beta) \in \{0,0.25,0.5,1,10,30\}$. Second, we use the first 2/3 of the training data to estimate forecast combination weights and jointly tune $\alpha$ and $\beta$ in the remaining 1/3 to yield the smallest value of the objective function, which is chosen to be either $\vertiii{\cdot}_2$-loss of precision matrix for simulations in Subsection 5.1, or MSFE for simulations in Subsection 5.2 and the empirical application. Note that when $\beta=0$, the optimization in \eqref{eq2} reduces to estimating $\bTheta_{\varepsilon,i}$ using Algorithm \ref{alg2} in each regime separately. Naturally, this incorporates the case when the structural break is strong and only the post-break data is used for producing forecast combination weights. When $\beta$ is large, there are weak structural breaks in $\bTheta_{\varepsilon,j}$, and  $\bTheta_{\varepsilon,j}$'s are estimated by using the data across different regimes. Section 6 provides more discussion on this in the context of our empirical application.
	
	The smoothing function $\psi(\cdot)$ in \eqref{eq2} can be LASSO ($\psi=\sum_{l,q}\abs{\cdot}$), Group LASSO ($\psi=\sum_{q}\norm{\cdot_q}_2$), or Ridge ($\psi=\sum_{l,q}(\cdot_{lq})^{2}$). LASSO penalty encourages small changes in the precision matrix over time: when the $lq$-th element changes at two consecutive times, the penalty forces the rest of the elements of the precision to remain the same. Group LASSO penalty allows the entire graph to restructure at some time points. This penalty is useful for anomaly detection, since it can identify structural changes in the network structure. Ridge penalty allows the network to change smoothly over time. This penalty is less strict than the LASSO penalty: instead of encouraging the graphs to be exactly the same, it allows smooth transitions. In our empirical application we use Ridge penalty to accommodate smooth transitions of precision over time.
	
	To estimate \eqref{eq2} we use the ADMM algorithm described in details in Supplementary Appendix \ref{appendix_admm}. Once $\bTheta_{\varepsilon,i}$ is estimated, we combine estimated factors, loadings and precision matrix of the idiosyncratic components using Sherman-Morrison-Woodbury formula to estimate the final precision matrix of forecast errors and use it to compute optimal forecast combination weights. We call the aforementioned procedure RD-FGL and summarize it in Algorithm \ref{alg3}.
\begin{algorithm}[H]
		\caption{RD-FGL}
		\label{alg3}
		\begin{algorithmic}[1]
			\STATE  Estimate $\{\bb_{i}\}_{i=1}^{p}$ and $\{\bf_t(\gamma_j)\}_{t=1}^{T}$ in \eqref{equ1_tv} using the weighted least squares problem in \eqref{wls}. Get $\widehat{\bSigma}_{f}$, $\widehat{\bTheta}_f$ and $\widehat{\bvarepsilon}_{t}(\gamma_j) = \be_{t} -  \widehat{\bB}(\gamma_j)\widehat{\bf}_{t}(\gamma_j)$.
			\STATE  Solve \eqref{eq2} using ADMM to get $\widehat{\bTheta}_{\varepsilon,j}$.
			\STATE  Use $\widehat{\bTheta}_{\varepsilon,j}$, $\widehat{\bTheta}_f$ and $\widehat{\bB}(\gamma_j)$ from Steps 1-2 to get $\widehat{\bTheta}_{j}(\gamma_j)=\widehat{\bTheta}_{\varepsilon,j}-\widehat{\bTheta}_{\varepsilon,j}\widehat{\bB}(\gamma_j)\lbrack\widehat{\bTheta}_f+\widehat{\bB}'(\gamma_j)\widehat{\bTheta}_{\varepsilon,j}\widehat{\bB}(\gamma_j)\rbrack^{-1}\widehat{\bB}'(\gamma_j)\widehat{\bTheta}_{\varepsilon,j}.$
			\STATE Use $\widehat{\bTheta}_{j}(\gamma_j)$ to get forecast combination weights $\widehat{\bw}_j(\gamma_j)=\frac{\widehat{\bTheta}_{j}(\gamma_j)\biota_p}{\biota_p'\widehat{\bTheta}_{j}(\gamma_j)\biota_p}$.
		\end{algorithmic} 
	\end{algorithm} 
\noindent We develop a scalable implementation of \eqref{eq2} for the RD-FGL in Algorithm \ref{alg3} through ADMM, which is extensively discussed in Supplementary Appendix \ref{appendix_admm}. ADMM is a  distributed convex optimization approach (\cite{BoydProximalAlgorithms}) that allows us to split the optimization problem in \eqref{eq2} into a series of subproblems. As pointed out in \cite{tvgl}, the scalability of ADMM comes from the improved runtime: to estimate a $p \times p$ matrix, the cost per iteration of ADMM is $\mathcal{O}(p^3)$ (which is the cost of an eigendecomposition of the $\bTheta$ step in Supplemental Appendix \ref{appendix_admm}). In contrast, the runtime of general interior-point methods is $\mathcal{O}(p^6)$ (\cite{mohan2014node}). 

\begin{remark}
Let us comment on the theoretical properties of RD-FGL. First, as shown in \cite{su2017time}, introducing time-varying factors does not change the main assumptions \ref{A1}-\ref{A3} on the errors, factors, factor loadings, and their interactions. This is because, as shown in \eqref{wls-modified}, the formulation with time-varying loadings can be reduced to the conventional PCA problem. Additional assumption that we need to impose is that $\E{\bf_t\bf'_t}$ is homogeneous over $t$. As discussed in Subsection 4.1,  this assumption is not restrictive. Second, we assume that the number of factors, $q$, and the number of forecasts, $p$, are not affected by the structural changes in loadings or idiosyncratic precision matrix. Allowing $p$ and $q$ to change is a straightforward extension and is left for future research. Third, assumptions \ref{B1}-\ref{B3} and assumption \ref{C1} are required to hold for each regime $j=1,\ldots,N+1$. Finally, we allow $s(\bTheta_{\varepsilon, j})=\mathcal{O}_P(s_{n_j})$ and $d(\bTheta_{\varepsilon, j})=\mathcal{O}_P(d_{n_j})$ to change for $j=1,\ldots,N+1$.  Let  $\omega_{n_j}\defeq \sqrt{\log p/n_j} +1/\sqrt{p}$. As long as $\varrho_{n_j}^{-1}\omega_{n_j}\xrightarrow{P}0$ and $\varrho_{n_j}d_{n_j}s_{n_j}\xrightarrow{P}0$ for each $j$, RD-FGL achieves the same rate as FGL in each regime:
		\begin{enumerate}
		\item[(i)]If $\varrho_{n_j}d_{n_j}^{2}s_{n_j}\xrightarrow{\text{P}}0$, RD-FGL consistently estimates forecast combination weights $\widehat{\bw}_j(\gamma_j)$ in Algorithm \ref{alg3}: $\norm{\widehat{\bw}_j(\gamma_j)-\bw_j}_1=\mathcal{O}_P\Big(\varrho_{n_j}d_{n_j}^2s_{n_j}\Big)=o_P(1)$.
		\item[(ii)] If $\varrho_{n_j}d_{n_j}s_{n_j}\xrightarrow{\text{P}}0$, FGL consistently estimates $\text{MSFE}(\bw_j,\bSigma_j)$: $	\abs{\frac{\text{MSFE}(\widehat{\bw}_j(\gamma),\widehat{\bSigma}_j)}{\text{MSFE}(\bw_j,\bSigma_j)} -1}=$\\$\mathcal{O}_P(\varrho_{n_j}d_{n_j}s_{n_j} )=o_P(1)$.
	\end{enumerate}
\end{remark} 
\subsection{Unknown Break Time and Number of Breaks}
The previous two subsections assumed that the number and location of breaks are known. We now relax these assumptions. First, assume that the number of breaks in factor loadings, $N_B$, and the number of breaks in idiosyncratic precision, $N_{\bTheta}$, are known and $N_B=N_{\bTheta}=1$, but their locations are unknown and might differ from each other.

To estimate the location of the break in factor loadings, we adapt the procedure in \cite{bai2020breakinloadings}. For a given break point in loadings, $T_{1}$, define the sum of squared residuals (SSR) as in \eqref{wls}:
\begin{equation}\label{wls_unknown}
	\text{SSR}(T_1) = (pT)^{-1}\sum_{i=1}^{p}\sum_{t=1}^{T}\big[ e_{it} - \bb'_{it}\bf_t  \big]^2 K_{\gamma t},
\end{equation}
where $K_{\gamma t} = \gamma \1{t \leq T_{1}} + \1{t > T_{1}}$ is a discrete kernel. The estimated break date is given by $\widehat{T}_{1}=\argmin_{1\leq T_{1}\leq T-1} \text{SSR}(T_{1})$.

To estimate the location of the break in $\bTheta_{\varepsilon}$ we use the procedure similar to \cite{bai2010breakinprecision}. Define $t_{1}$ to be a break point in $\bTheta_{\varepsilon}$. Recall, $n_j = t_j - t_{j-1}$, where $j=1,2$. Note that the number of observations in each regime depends on $t_1$: $n_1 = t_1 - t_{0}$ and $n_2 = t_2 - t_{1}$. For a given break point in idiosyncratic precision $t_1$, define the following objective function as in \eqref{eq2}:
	\begin{align} \label{eq2_unknown}
L(t_1)=\sum_{j=1}^{2}n_j\Big[\text{tr}\Big(\widehat{\bSigma}_{\varepsilon,j}\bTheta_{\varepsilon,j}\Big)-\log\det\bTheta_{\varepsilon,j}		\Big] 
	+\alpha\norm{\bTheta_{\varepsilon,j}}_{\text{od},1}
	+\beta \psi(\bTheta_{\varepsilon,2}-\bTheta_{\varepsilon,1}). 
\end{align}
The estimated break date is given by $\widehat{t}_{1}=\argmin_{1\leq t_{1}\leq T-1} L(t_{1})$.\footnote{As noted by \cite{bai1998estimating}, it is difficult to detect the break near the end of the sample. It is reasonable to expect that if break magnitude is reduced, the procedures in \eqref{wls_unknown} and \eqref{eq2_unknown} might not detect the break or detect it with a delay. \cite{smith2021break} point out that in such situations only few observations from the current regime are available to estimate the model parameters, leading to volatile and inaccurate forecasts. To address slow detection of breaks, \cite{smith2021break} exploit information in the cross-section to detect breaks more rapidly in real time. To address small sample size problem, \cite{smith2021break} adopt a Bayesian approach that uses economically motivated priors to shrink the parameters towards sensible values that rule out economically implausible values. As we discussed in Footnote 4, the Bayesian framework is alternative to using kernel-weighted observations.}

\indent When the number of breaks in either loadings is known and greater than 1 ($N_B>1$) and/or $N_{\bTheta}>1$, we can use the one-at-a-time approach as in \cite{bai2010breakinprecision}: the objective functions are identical to \eqref{wls_unknown} and \eqref{eq2_unknown}. The breaks are estimated sequentially. Once the first break is obtained, we split the sample at the estimated break point, resulting in two subsamples. A single break point in each subsample is estimated, but only one that achieves the smallest objective function (\eqref{wls_unknown} or \eqref{eq2_unknown})  is retained. If the number of breaks is equal to two, the procedure is stopped. Otherwise, we continue splitting into subsamples until all breaks are estimated.

If the number of breaks is unknown, we proceed as suggested in \cite{bai2010breakinprecision}: in the aforementioned one-at-a-time approach apply the test for existence of break point (\cite{bai2003multiplebreak}) to each subsample before estimating a break point.
	\section{Monte Carlo}
	We divide the simulation results into two subsections. In the first subsection we study the consistency of the FGL and RD-FGL for estimating precision matrix and the combination weights. In the second subsection we evaluate the out-of-sample forecasting performance of combined forecasts in terms of MSFE. We compare the performance of forecast combinations based on the factor models in Algorithms \ref{alg3}, \ref{alg2} with equal-weighted (EW) forecast combination\footnote{As pointed out by the referee, EW arises when the forecast errors follow a factor structure (one factor, homogeneous idiosyncratic variance). It can be viewed as one of ``factor-based" methods.}, and combinations that use GL without factor structure (Algorithm \ref{alg1a}). We examine the performance of RD-FGL for different specifications of the smoothing function $\psi(\cdot)$ as described in Subsection 4.2. LASSO penalty is denoted as $\ell_1$, Group LASSO as $\ell_g$, and Ridge as $\ell_2$. Similarly to the literature on graphical models, all exercises use 100 Monte Carlo simulations. We present simulation results with a structural break in both $\bB$ and $\bTheta_{\varepsilon}$. The results without a break, with break only in $\bTheta_{\varepsilon}$, and with multiple breaks can be found in Supplemental Appendix \ref{additional_sims1}.
	\subsection{Consistent Estimation of Forecast Combination Weights}
	We consider sparse Gaussian graphical models which may be fully specified by a precision matrix $\bTheta_0$. Therefore, the random sample is distributed as $\be_t=(e_{1t},\ldots,e_{pt})' \sim \mathcal{N}(0,\bSigma_{0})$, where $\bTheta_0=(\bSigma_{0})^{-1}$  for $t=1,\ldots, T, \ i=1,\ldots, p$. Let $\widehat{\bTheta}$ be the precision matrix estimator. We show consistency of the FGL in (i) the operator norm, $\vertiii{\widehat{\bTheta}-\bTheta_{0}}_{2}$, and (ii) in $\ell_1$-vector norm for the combination weights, $\norm{\widehat{\bw}-\bw}_1$, where $\bw$ is given by \eqref{eq13}.
	
	The forecast errors are assumed to have the following structure:
	\begin{align} \label{e51}
	\underbrace{\be_t}_{p \times 1}=\bB \underbrace{\bf_t}_{q\times 1}+ \ \bvarepsilon_t, \ \bf_{t}=\phi_f\bf_{t-1}+\bzeta_t,\quad t=1,\ldots,T
	\end{align}
	where $\be_{t}$ is a $p \times 1$ vector of forecast errors following $\mathcal{N}(\bm{0},\bSigma)$, $\bf_{t}$ is a $q \times 1$ vector of factors, $\bB$ is a $p\times q$ matrix of factor loadings, $\phi_f$ is an autoregressive parameter in the factors which is a scalar for simplicity, $\bzeta_t$ is a $q \times 1$ random vector with each component independently following  $\mathcal{N}(0,\sigma^{2}_{\zeta})$, $\bvarepsilon_t$ is a $p \times 1$ random vector following $\mathcal{N}(0,\bSigma_{\varepsilon})$, with sparse $\bTheta_{\varepsilon}$ that has a random graph structure described below. To create $\bB$ in \eqref{e51} we take the first $q$ columns of an upper triangular matrix from a Cholesky decomposition of the $p \times p$ Toeplitz matrix parameterized by $\rho$: that is, $\bB = (b)_{lm}$, where $(b)_{lm}=\rho^{\abs{l-m}}$, $l,m\in \{1,\ldots,p\}$.
	 We set $\rho = 0.2$, $\phi_f = 0.2$ and $\sigma^{2}_{\zeta} = 1$. The specification in \eqref{e51} leads to the low-rank plus sparse decomposition of the covariance matrix $\E{\be_t\be'_t}=\bSigma=\bB\bSigma_{f}\bB'+ \bSigma_{\varepsilon}$. When $\bSigma_{\varepsilon}$ has a sparse inverse $\bTheta_{\varepsilon}$, it leads to the low-rank plus sparse decomposition of the precision matrix $\bTheta$, such that $\bTheta$ can be expressed as a function of the low-rank $\bTheta_{f}$ plus sparse $\bTheta_{\varepsilon}$.
	
	We consider the following setup: let $p = T^{\delta}$, $\delta = 0.85$, $q = 2(\log(T))^{0.5}$ and $T = \lbrack 2^{\kappa} \rbrack, \ \text{for} \ \kappa=7,7.5,8,\ldots,9.5$. Our setup allows the number of individual forecasts, $p$, and the number of common factors in the forecast errors, $q$, to increase with the sample size, $T$.
	
	 A sparse precision matrix of the idiosyncratic components $\bTheta_{\varepsilon}$ is constructed as follows: we first generate the adjacency matrix using a random graph structure. Define a $p \times p$ adjacency matrix $\bA_{\varepsilon}$ which represents the structure of the graph
	with $a_{\varepsilon,lm}$ being the $l,m$-th element of the adjacency matrix $\bA_{\varepsilon}$. We set $a_{\varepsilon,lm} = a_{\varepsilon,ml}=1, \ \text{for} \ l\neq m$ with probability $\pi$, and $0$ otherwise. Such structure results in $s_T = p(p-1)\pi/2$ edges in the graph. To control sparsity, we set $\pi = 500/(pT^{0.8})$, which makes $s_T = \mathcal{O}(T^{0.05})$. The adjacency matrix has all diagonal elements equal to zero. To generate a sparse symmetric positive-definite precision matrix we use Scikit-Learn datasets package in Python (\cite{scikit}). To control the magnitude of partial correlations, the value of the smallest coefficient is set to 0.1 and the value of the largest coefficient is set to 0.3.
	
To incorporate structural breaks in $\bTheta_{\varepsilon}$ and factor loadings $\bB$, we proceed as follows. We fix a single break point in the middle of the sample size, $T/2$: in the precision matrix of the idiosyncratic errors before the break, referred to as $\bTheta_{\varepsilon,1}$, the value of the largest coefficient is set to 0.4; whereas in the precision matrix of the idiosyncratic errors after the break, $\bTheta_{\varepsilon,2}$, the value of the largest coefficient is set to 0.6. As a consequence, even though both matrices are still sparse, $\bTheta_{\varepsilon,2}$ has larger partial correlations. We use $\bTheta_{\varepsilon,1}$ and $\bTheta_{\varepsilon,2}$ to generate $\bvarepsilon_t$ in \eqref{e51}. For the structural break in factor loadings (which is assumed to happen at the same time as the structural change in $\bTheta_{\varepsilon}$), before the break we set $\rho_1=0.2$ in the Toeplitz matrix used to generate $\bB$ (i.e., $\bB = (b)_{lm}$, where $(b)_{lm}=\rho^{\abs{l-m}}$), and after the break we set $\rho_2=0.6$.

Figure \ref{fig1_revision} shows the averaged (over Monte Carlo simulations) errors of the estimators of the precision matrix $\bTheta$ and the optimal combination weight versus the sample size $T$ in the logarithmic scale (base 2). The estimate of the precision matrix of the EW forecast combination is obtained using the fact that diagonal covariance and precision matrices imply equal weights. To determine the values of the diagonal elements we use the shrinkage intensity coefficient calculated as the average of the eigenvalues of the sample covariance matrix of the forecast errors (see \cite{Ledoit2004}).

Figure \ref{fig1_revision} examines the performance when there are breaks in both $\bTheta_{\varepsilon}$ and $\bB$: accounting for the break significantly reduces the estimation error of precision matrix and combination weights. We report the results for the case when $\gamma$ is estimated using cross-validation ($\gamma = \hat{\gamma}$) (as discussed in Section 4). Supplemental Appendix \ref{additional_sims1} presents the results for the case when the break is only in $\bTheta_{\varepsilon}$.

\begin{figure}[!h]
	\centering
	\includegraphics[width=\textwidth]{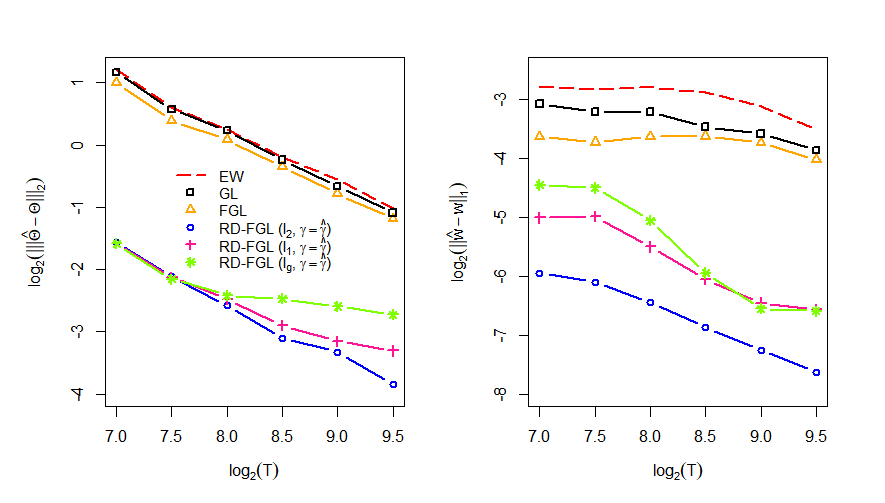} 
	\bigskip
	\caption{\textbf{Averaged errors of the estimators of $\bTheta$ (left) and $\bw$ on logarithmic scale (base 2): break in $\bTheta_{\varepsilon}$ and factor loadings $\bB$. $p = T^{0.85}$, $q = 2(\log(T))^{0.5}$, $s_T = \mathcal{O}(T^{0.05})$.}}
	\label{fig1_revision}
\end{figure}
	\subsection{Comparing Performance of Forecast Combinations}	
We consider the standard forecasting model in the literature (e.g., \cite{Stock2002}), which uses the factor structure of the high dimensional predictors.	
Suppose the data is generated from the following data generating process (DGP):
\begin{align}
&\bx_t=\bLambda\bg_t+\bv_t, \ \bg_{t}=\phi\bg_{t-1}+\bxi_t, \ y_{t+1}=\bg'_t\balpha+\sum_{s=1}^{\infty}\theta_{s}\epsilon_{t+1-s}+\epsilon_{t+1}, \label{e39}
\end{align}
where $y_{t+1}$ is a univariate series of our interest in forecasting, $\bx_t$ is an $M \times 1$ vector of regressors (predictors), $\balpha$ is an $M \times 1$ parameter vector, $\bg_{t}$ is an $r \times 1$ vector of factors, $\bLambda$ is an $M\times r$ matrix of factor loadings, $\bv_t$ is an $M \times 1$ random vector following $\mathcal{N}(0,\sigma^{2}_{v}\bI_M)$, $\phi$ is an autoregressive parameter in the factors which is a scalar for simplicity, $\bxi_t$ is an $M \times 1$ random vector with each component independently following $\mathcal{N}(0,\sigma^{2}_{\xi})$, $\epsilon_{t+1}$ is a random error following $\mathcal{N}(0,\sigma^{2}_{\epsilon})$, and $\balpha$ is an $r\times 1$ parameter vector which is drawn randomly from $\mathcal{N}(1,1)$. We set $\sigma_{\epsilon}=1$. The coefficients $\theta_{s}$ are set according to the rule $\theta_{s}=(1+s)^{c_1}c_{2}^{s}$
as in \cite{HANSEN2008}. We set $c_1=0.75$. We set $M=100$ and generate $r=5$ factors. To create $\bLambda$ in \eqref{e39} we take the first $r$ rows of an upper triangular matrix from a Cholesky decomposition of the $M \times M$ Toeplitz matrix parameterized by $\rho = 0.9$. The ranking of competing models was not very sensitive to varying values of $\phi$, $\rho$, $c_2$, and $r$.

One-step ahead forecasts are estimated from the factor-augmented autoregressive (FAR) models of orders $k,l$, denoted as  FAR($k,l$):
\begin{align}\label{e42}
&\hat{y}_{t+1}=\hat{\mu}
+\hat{\kappa}_1\hat{g}_{1,t}+\cdots+ \hat{\kappa}_k\hat{g}_{k,t}
+\hat{\psi}_1y_{t}+\cdots+\hat{\psi}_ly_{t+1-l},
\end{align}
where the factors $({\hat{g}_{1,t},\ldots,\hat{g}_{k,t}})$ are estimated from equation (\ref{e39}).
We consider the FAR models of various orders, with $k=1,\ldots,K$ and $l=1,\ldots,L$.
We also consider the models without any lagged $y$ or any factors. Therefore, the total number of forecasting models is $p \equiv (1+K)\times(1+L)$, which includes the forecasting models using naive average or no factors. We set $K=2$ and $L=7$.

The total number of observations is $m$. The period for training the models is set to be $m_1 = m/2$ -- this is used to train competing FAR models in \eqref{e42}. The remaining part of the sample, $m_2 = m-m_1$ is split as follows: the estimation window for training competing models (that is, EW, GL, FGL, and RD-FGL) is set to be of size $= m_2/2$. We roll the estimation window over the the test sample of the size $m_2/2$ to update all the estimates in each point of time. Recall that $q$ denotes the number of factors in the forecast errors as in equation (\ref{equ1}).

To incorporate structural break we proceed as follows. The period for training the models is set to be $m_1 = m/3$ -- this is used to train competing FAR models in \eqref{e42}. The remaining part of the sample, $m_2 = m-m_1$ is split as follows: the estimation window for training competing models is set to be of size  $= m_2/2$. We roll the estimation window over the test sample of the size $m_2/2$. The break point is fixed at 1/2 of the first estimation window. Before the break, when generating $\theta_{s}$ we set $c_2 = 0.3$, and after the break $c_2 = 0.9$. All other parameters stay unchanged. Notice that the break in $c_2$ can propagate into both a break in precision matrix and factor loadings.

Similarly to the previous subsection, we include different specifications of the smoothing function $\psi(\cdot)$. Figure \ref{fig6} shows the performance of all models including RD-FGL with $\gamma$ estimated using cross-validation: similarly to the conclusions in the previous subsection, accounting for the break significantly reduces MSFE of the combined forecast.
\begin{figure}[!htb]
	\centering
	\includegraphics[width=0.55\textwidth]{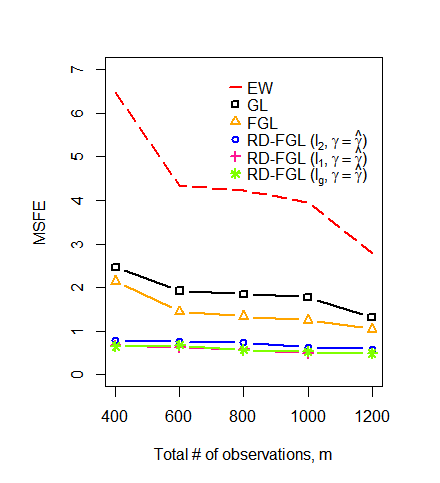} 
	\bigskip
	\caption{\textbf{Plots of the MSFE over the total number of observations \textit{m}}. $c_1=0.75$,  $c_2=0.3$ (before the break), $c_2=0.9$ (after the break), $ M=100, \ r=5, \sigma_\xi=1,\ L=7,\ K=2,\ p=24,\ q=3, \ \rho=0.9,\ \phi=0.8$.}
	\label{fig6}
\end{figure}
	\section{Application to Combining ECB SPF Forecasts}
	We use quarterly forecasts on the expected rates of inflation, real GDP growth and unemployment rate in the Euro area published by the \href{https://www.ecb.europa.eu/stats/ecb_surveys/survey_of_professional_forecasters/html/index.en.html}{ECB}. The raw data records 119 forecasters in total, but the panel is highly unbalanced with many missing values due to entry and exit in the long span. We follow \cite{Shi_ForecastCombinations} to obtain most qualified forecasters: first, we filter out irregular respondents if they missed more than 45\% of the observations; second, we use a random forest imputation algorithm (\cite{MissForest-R,MissForest-Paper}) to interpolate the remaining missing values. We consider the forecasts of three main economic indicators: (1) Real GDP growth defined as the year-on-year (YoY) percentage
	change of real GDP, based on standardized European System of National and Regional Accounts (ESA) definition. The time period under consideration is 1999:Q3-2023Q1 (which yields the total number of observations equal to 95), the final number of forecasters is $p=59$, and the prediction horizon is 2-quarters ahead. (2) Inflation which is defined as the YoY percentage change of the Harmonised Index of Consumer Prices (HICP) published by Eurostat. The time period under consideration is 1999:Q4-2023Q1 (which yields the total number of observations equal to 95),  the final number of forecasters is $p=56$, and the prediction horizon is 2-quarters ahead. (3) Unemployment rate which refers to Eurostat's definition and it is calculated as percentage of the labor force. The time period under consideration is 1999:Q3-2022Q4 (which yields the total number of observations equal to 94), the final number of forecasters is $p=45$, and the prediction horizon is 2-quarters ahead.
	
	We consider four choices of the training sample: $R\in\{20,30,40,50\}$, the estimation window is rolled over the test sample to update the estimates in each point of time. The optimal number of factors in the forecast errors (denoted as $q$ in equation \eqref{equ1}) is chosen using the standard data-driven method that uses the information criterion IC1 described in \cite{Bai2002}. In the majority of the cases the optimal number of factors was estimated to be equal to 1. To explore the benefits of using FGL and RD-FGL for forecast error quantification, we consider several alternative estimators of covariance/precision matrix of the idiosyncratic component in \eqref{3.11_main}: (1) linear shrinkage estimator of covariance developed by \cite{Ledoit2004} further referred to as Factor LW (FLW); (2) nonlinear shrinkage estimator of covariance by \cite{ledoit2017nonlinear} (Factor NLW or FNLW); (3) POET (\cite{fan2013POET}); (4) constrained $\ell_{1}$-minimization for inverse matrix estimator, CLIME (\cite{cai2011constrained}) (Factor CLIME or FCLIME); (5) nodewise regression developed by \cite{meinshausen2006} (Factor MB or FMB). To examine the benefits of imposing sparsity on $\bTheta_{\varepsilon}$ we also include the factor model without sparsity assumption on the idiosyncratic error precision matrix (referred to as Not Sparse) -- this corresponds to imposing $\tau=0$ in \eqref{e7.6}. To examine the benefits of using factor structure, we include several counterparts of the aforementioned models that directly estimate precision of the forecast errors without estimating factors and loadings: GL, LW, NLW, CLIME, and MB. For RD-FGL, similarly to the simulations, we include different specifications of the smoothing function $\psi(\cdot)$ and report the results for the case when the break parameter in factor loadings is estimated using cross-valiation ($\gamma=\hat{\gamma}$).

	 Our benchmark is the simple average with equal weights on all forecasters (referred to as EW). Going back to the discussion in Section 4 regarding setting $\beta=0$ in equation \eqref{eq2}: as we pointed out, this corresponds to using only post-break sample for estimation which is suboptimal since the value of  $\beta$ is already chosen optimally from the grid that includes $\beta = 0$ to minimize the MSFE. Hence, by construction, RD-FGL is superior to using only post-break data.
	
	For RD-FGL the number of breaks for loadings and precision is estimated using the test for existence of break point (\cite{bai2003multiplebreak}): using their sequential procedure we search for up to three breaks and
	set the trimming parameter to 10\% of the total number of observations,
	and the significance level at 5\%. The location of the break points for each series is estimated using the one-at-a-time approach described in Subsection 4.3.

	
	Table \ref{tab2} compares the performance of FGL and RD-FGL with the competitors for predicting three macroeconomic indicators for Euro-area using a combination of ECB SPF forecasts. It reports the ratios of MSFE of each method to the MSFE of the EW combined forecast. Using the Model Confidence Set (MCS) of \cite{hansenMCS}, we identify the set of superior models (SSM) for each series and horizon at 90\% confidence level. Once SSM is identified, we rank these models according to the relative sample loss of the $i$-th model relative to the average across models in SSM, and report the ranking in Table \ref{tab2}.

\begin{sidewaystable}
	\centering
	\resizebox{\linewidth}{!}{%
		\begin{tabular}{cccccccccccccccc} 
			\toprule
			& \textbf{GL} & \textbf{LW} & \textbf{NLW} & \textbf{CLIME} & \textbf{MB} & \textbf{POET} & \textbf{Not Sparse} & \textbf{FGL} & \textbf{FLW} & \textbf{FNLW} & \textbf{FCLIME} & \textbf{FMB} & \begin{tabular}[c]{@{}c@{}}\textbf{RD-FGL}\\~($\ell_2, \gamma=\hat{\gamma}$)\end{tabular} & \begin{tabular}[c]{@{}c@{}}\textbf{RD-FGL}\\~($\ell_1, \gamma=\hat{\gamma}$)\end{tabular} & \begin{tabular}[c]{@{}c@{}}\textbf{RD-FGL}\\~($\ell_g, \gamma=\hat{\gamma}$)\end{tabular} \\ 
			\hline
			\multicolumn{16}{c}{\textbf{Real GDP growth}} \\ 
			\hline
			$R$=20 & 3.3820 & 4.7845 & 1.5768 & 7.3391 & 1.1164 & 3.8275 & 7.0076 & 1.3178 & 0.9961 & 0.8719 & 3.4197 & 0.9962 & \textbf{0.4092} & 0.4142 & 0.4128 \\
			Ranking &  &  &  &  &  &  &  &  &  & 4 &  &  & 1 & 2 & 3 \\
			$R$=30 & 1.8984 & 1.2855 & 3.9910 & 5.9043 & 56.2033 & 0.9806 & 6.6966 & 0.9836 & 0.9372 & 3.5575 & 0.9865 & 0.8954 & \textbf{0.4163} & 0.4221 & 0.4204 \\
			Ranking &  &  &  &  &  &  &  &  &  &  &  &  & 1 & 2 & 3 \\
			$R$=40 & 2.1372 & 1.7982 & 4.9564 & 7.8310 & 0.9619 & 1.0327 & 18.1483 & 0.9353 & 0.9467 & 2.9197 & 0.9732 & 1.4725 & \textbf{0.4796} & 0.4868 & 0.4863 \\
			Ranking &  &  &  &  &  &  &  &  &  &  &  &  & 1 & 2 & 3 \\
			$R$=50 & 1.1706 & 1.4685 & 2.6014 & 3.9301 & 0.9594 & 1.0420 & 32.5874 & 0.9133 & 0.9399 & 2.5874 & 0.9860 & 0.9231 & \textbf{0.4296} & 0.4330 & 0.4329 \\
			Ranking &  &  &  &  &  &  &  &  &  &  &  &  & 1 & 2 & 3 \\ 
			\hline
			\multicolumn{16}{c}{\textbf{Inflation}} \\ 
			\hline
			$R$=20 & 0.9943 & 0.7184 & 0.5277 & 1.0063 & 2.9301 & 0.5431 & 1.7126 & 0.8970 & 0.6740 & 0.6617 & 0.5182 & 0.7917 & 0.3682 & 0.3711 & \textbf{0.3678} \\
			Ranking &  &  &  &  &  &  &  &  &  &  &  &  & 1 & 3 & 2 \\
			$R$=30 & 0.9659 & 0.7403 & 0.7628 & 1.0158 & 0.9019 & 0.5582 & 1.0544 & 0.9493 & 0.7301 & 0.5959 & 0.6970 & 0.7995 & \textbf{0.4326} & 0.4353 & 0.4832 \\
			Ranking &  &  &  &  &  & 4 &  &  &  &  &  &  & 1 & 3 & 2 \\
			$R$=40 & 0.8584 & 0.7780 & 0.5999 & 1.0122 & 0.9093 & 0.9463 & 1.0360 & 0.6641 & 0.5910 & 0.5692 & 0.6831 & 0.9416 & 0.3145 & 0.3355 & \textbf{0.3088} \\
			Ranking &  &  &  &  &  &  &  &  &  &  &  &  & 1 & 2 & 3 \\
			$R$=50 & 0.9014 & 0.7694 & 0.5576 & 0.6314 & 0.9375 & 1.0170 & 3.9136 & 0.5828 & 0.5903 & 0.5336 & 0.9573 & 0.7392 & \textbf{0.4184} & 0.4180 & 0.4314 \\
			Ranking &  &  &  &  &  &  &  &  &  &  &  &  & 1 & 2 & 3 \\ 
			\hline
			\multicolumn{16}{c}{\textbf{Unemployment rate}} \\ 
			\hline
			$R$=20 & 0.8731 & 0.9956 & 0.8496 & 0.9884 & 19.8034 & 0.9823 & 1.3067 & 0.9178 & 0.8732 & \textbf{0.7557} & 0.9247 & 1.6464 & 0.8951 & 0.9042 & 0.9054 \\
			Ranking & 4 &  &  &  &  & 6 &  & 5 & 3 & 2 & 1 &  &  &  &  \\
			$R$=30 & 0.8185 & 0.8763 & 0.8358 & 0.9713 & 42.1746 & 0.9815 & 1.7945 & \textbf{0.8031} & 0.8599 & 0.8320 & 0.9647 & 3.8414 & 1.0769 & 1.1357 & 1.1237 \\
			Ranking & 3 & 2 & 5 & 1 &  &  &  &  &  & 4 &  &  &  &  &  \\
			$R$=40 & 1.3733 & 1.3748 & 1.3526 & 1.4163 & 1.7992 & 0.9793 & 11.8839 & \textbf{0.9333} & 1.2844 & 1.2652 & 0.9941 & 1.1644 & 1.3359 & 1.3430 & 1.3383 \\
			Ranking* &  &  &  &  &  & 3 &  & 1 &  &  & 2 &  &  &  &  \\
			$R$=50 & 1.0145 & 1.4473 & 1.7531 & 1.2623 & 59.6932 & 1.4259 & 7.0196 & \textbf{0.8419} & 1.1558 & 1.1355 & 0.9896 & 2.7930 & 1.3761 & 1.4660 & 1.4607 \\
			Ranking & 3 &  &  &  &  &  &  & 1 &  &  & 2 &  &  &  &  \\
			\bottomrule
		\end{tabular}
	 }
\caption{\textbf{Prediction of Quarterly Macroeconomic Variables for Euro-area Using ECB SPF Forecasts.} MSFEs of competing methods are reported for each value of $R$, where $R$ indicates the length of the training window. Ratio indicates the ratio to MSFE of the Equal-Weighted combined forecast. Models with the lowest ratio are in bold. Models that belong to the SSM according to MCS test are ranked according to the relative sample loss of the $i$-th model relative to the average across models in SSM (at 90\% confidence level). For unemployment rate ($R$=40), the ranking marked with a star means that EW is included in the SSM (EW is ranked the 4th).} 
\label{tab2}
\end{sidewaystable}

	  There are three main findings that we learn from analyzing Table \ref{tab2}: \textbf{(1)} for all series factor-based models outperform non-factor ones. This means that incorporating the factor structure in the forecast errors improves forecasting performance. \textbf{(2)} for all series the Not Sparse model provides one of the worst performances. This means that the factor structure per se is not sufficient to achieve performance gains over EW, hence, it is necessary to impose sparsity on the precision matrix of the idiosyncratic components. \textbf{(3)} For real GDP growth and inflation series RD-FGL is always included in the SSM. For the unemployment rate, FGL outperforms RD-FGL. This result is supported by the behavior observed in the actual series: real GDP growth and inflation exhibit strong breaks following the global financial crisis and Covid pandemic, however this is not the case for the unemployment rate series that did not have strong breaks throughout the whole sample period.
	\section{Conclusions}
	In this paper we develop a unified framework to generalize network inference under a factor structure in the presence of structural breaks. We overcome the challenge of using graphical models under the factor structure and provide a simple approach that allows practitioners to combine a large number of forecasts when experts tend to make common mistakes. Using pre- and post-break data, our new approach to forecast combinations breaks down forecast errors into common and unique parts which improves the accuracy of the combined forecast. We allow the structural breaks to affect factor loadings and idiosyncratic precision matrix. For the ease of practical use we develop a scalable optimization procedure for RD-FGL, based on the ADMM. The empirical application to forecasting macroeconomic series using the data of the ECB Survey of Professional Forecasters shows that incorporating (i) factor structure in the forecast errors together with (ii) sparsity in the precision matrix of the idiosyncratic components and (iii) regime-dependent combination weights improves the performance of a combined forecast.

	\cleardoublepage
	
	\phantomsection
	
	\addcontentsline{toc}{section}{References}
	
	\setlength{\baselineskip}{14pt}
	\bibliographystyle{apalike}
	\bibliography{LeeSereginaForecasting}

\cleardoublepage
\renewcommand{\appendixpagename}{}
\renewcommand{\thesubsection}{A.\arabic{subsection}}
\vspace{-35pt}
\begin{appendices} 
	\renewcommand{\thesection}{\Alph{section}}
	\renewcommand{\thesubsection}{\Alph{section}.\arabic{subsection}}
	\renewcommand{\theequation}{\Alph{section}.\arabic{equation}}
	\captionsetup{%
		figurewithin=section,
		tablewithin=section
	}

\setcounter{algorithm}{0}
\renewcommand{\thealgorithm}{\Alph{section}.\arabic{algorithm}}
	\begin{spacing}{1}
\begin{center}
\Large{\textsc{Supplemental Appendix to \\ 
		``Combining Forecasts under Structural Breaks Using Graphical LASSO"}}	
\end{center}
\end{spacing}
\section{Graphical Lasso Algorithm} \label{appendixA0theor}
	\begin{spacing}{1.25}
Recall that we have $p$ competing forecasts of the univariate series $y_t$,  $t=1,\ldots,T$. Let $\be_t=(e_{1t},\ldots,e_{pt})' \sim \mathcal{N} (\mathbf{0}, \bSigma)$ be a $p \times 1$ vector of forecast errors. Assume they follow a Gaussian distribution.
The precision matrix $\bSigma^{-1}\defeq \bTheta$ contains information about partial covariances between the variables. For instance, if $\theta_{ij}$, which is the $ij$-th element of the precision matrix, is zero, then the variables $i$ and $j$ are conditionally independent, given the other variables.

Let $\bW$ be the estimate of $\bSigma$. Given a sample $\{\be_t\}_{t=1}^{T}$, let $\bS = (1/T)\sum_{t=1}^{T}(\be_t)(\be_t)'$ denote the sample covariance matrix, which can be used as a choice for $\bW$. Also, let $\widehat{\bGamma}^2\defeq \textup{diag}(\bW)$ and its $(i,j)$-th element is denoted as $\widehat{\gamma}_{ij}$. We can write down truncated Gaussian log-likelihood (up to constants) $l(\bTheta)=\log\det(\bTheta)-\text{tr}(\bW\bTheta)$. When $\bW=\bS$, the maximum likelihood estimator of $\bTheta$ is $\widehat{\bTheta}=\bS^{-1}$. The objective function associated with truncated Gaussian log-likelihood is also known as Bregman divergence and was shown to be applicable for non-Gaussian distributions (\cite{ravikumar2011}).

In the high-dimensional settings it is necessary to regularize the precision matrix, which means that some edges will be zero. A natural way to induce sparsity in the estimation of precision matrix is to add penalty to the maximum likelihood and use the connection between the precision matrix and regression coefficients to maximize the following penalized log-likelihood that weighs the variables by their scale:
\begin{align} \label{eq62}
	&\widehat{\bTheta}_{\tau}=\argmin_{\bTheta=\bTheta'}\textup{tr}(\bW\bTheta)-\log\det(\bTheta)+\tau\sum_{i\neq j}\widehat{\gamma}_{ii}\widehat{\gamma}_{jj}\abs{\theta_{ij}},
\end{align}
over positive definite symmetric matrices, where $\tau\geq0$ is a penalty parameter for the off-diagonal elements. We refer to the objective function in \eqref{eq62} as a ``weighted penalized log-likelihood". The subscript $\tau$ in $\widehat{\bTheta}_{\tau}$ means that the solution of the optimization problem in \eqref{eq62} will depend upon the choice of the tuning parameter. In order to simplify notation, we will omit the subscript.

Define the following partitions of $\bW$, $\bS$ and $\bTheta$:
\begin{equation} \label{eq43}
	\bW=\begin{pmatrix}
		\underbrace{\bW_{11}}_{(p-1)\times(p-1)}&\underbrace{\bw_{12}}_{(p-1)\times 1}\\\bw_{12}'&w_{22}
	\end{pmatrix}, \bS=\begin{pmatrix}
		\underbrace{\bS_{11}}_{(p-1)\times(p-1)}&\underbrace{\bs_{12}}_{(p-1)\times 1}\\\bs_{12}'&s_{22}
	\end{pmatrix}, \bTheta=\begin{pmatrix}
		\underbrace{\bTheta_{11}}_{(p-1)\times(p-1)}&\underbrace{\btheta_{12}}_{(p-1)\times 1}\\\btheta_{12}'&\theta_{22}
	\end{pmatrix}.
\end{equation}
Let $\bbeta\defeq -\btheta_{12}/\theta_{22}$. The idea of GL is to set $\bW= \bS+\tau\bI$ in \eqref{eq62} and combine the gradient of \eqref{eq62} with the formula for partitioned inverses to obtain the following $\ell_1$-regularized quadratic program
\begin{equation}\label{eq50}
	\widehat{\bbeta}=\argmin_{\bbeta\in \mathbb{R}^{p-1}}\Bigl\{ \frac{1}{2}\bbeta'\bW_{11}\bbeta-\bbeta'\bs_{12}+\sum_{i\neq j}\tau\widehat{\gamma}_{ii}\widehat{\gamma}_{jj}\norm{\bbeta}_1\Bigr\}.
\end{equation} 
As shown by \cite{GLASSO}, \eqref{eq50} can be viewed as a LASSO regression, where the LASSO estimates are functions of the inner products of $\bW_{11}$ and $s_{12}$. Hence, \eqref{eq62} is equivalent to $p$ coupled LASSO problems. Once we obtain $\widehat{\bbeta}$, we can estimate the entries $\bTheta$ using the formula for partitioned inverses. The weighted GL procedure is summarized in Algorithm \ref{alg1a}. 
\end{spacing}
\begin{spacing}{1}
	\begin{algorithm}[H]
		\caption{Weighted Graphical LASSO}
		\label{alg1a}
		\begin{algorithmic}[1]
			\STATE 	Initialize $\bW= \bS+\tau\bI$, with $w_{ii}=s_{ii}$. The diagonal of $\bW$ remains the same in what follows.
			\STATE Estimate a sparse $\bTheta$ using the following weighted Graphical LASSO objective function:
			\begin{equation}\nonumber
				\widehat{\bTheta}_{\tau}=\argmin_{\bTheta=\bTheta'}\textup{tr}(\bW\bTheta)-\log\det(\bTheta)+\tau\sum_{i\neq j}\widehat{\gamma}_{ii}\widehat{\gamma}_{jj}\abs{\theta_{ij}},
			\end{equation}
			over positive definite symmetric matrices.
			\STATE Repeat for $j=1,\ldots,p,1,\ldots,p,\ldots$ until convergence:
			\begin{itemize}
				\item Partition $\bW$ into part 1: all but the $j$-th row and column, and part 2: the $j$-th row and column.
				\item  Solve the score equations using the cyclical coordinate descent:
				\begin{equation*}
					\bW_{11}\bbeta-\bs_{12}+\tau\widehat{\gamma}_{ii}\widehat{\gamma}_{jj}\cdot\text{Sign}(\bbeta)=\mathbf{0}.
				\end{equation*}
				This gives a $(p-1) \times 1$ vector solution $\widehat{\bbeta}.$
				\item Update $\widehat{\bw}_{12}=\bW_{11}\widehat{\bbeta}$.
			\end{itemize}
			\STATE In the final cycle (for $i=1,\ldots,p$) solve for 
			\begin{equation*}
				\frac{1}{\widehat{\theta}_{22}}=w_{22}-\widehat{\bbeta}'\widehat{\bw}_{12}, \quad 
				\widehat{\btheta}_{12}=-\widehat{\theta}_{22}\widehat{\bbeta}.
			\end{equation*}
		\end{algorithmic} 
	\end{algorithm}	
\end{spacing}
	\begin{spacing}{1.25}
As was shown in \cite{GLASSO}, the estimator produced by Algorithm \ref{alg1a} is guaranteed to be positive definite. Furthermore, \cite{Sara2018} showed that Algorithm \ref{alg1a} is guaranteed to converge and produces consistent estimator of precision matrix under certain sparsity conditions.
\end{spacing}
\section{Factor Graphical LASSO} \label{appendix_fgl}
	\begin{spacing}{1.25}
	\begin{algorithm}[H]
		\caption{Factor Graphical LASSO (FGL)}
		\label{alg2}
		\begin{algorithmic}[1]
			\STATE Estimate factors, $\widehat{\bf}_t$, and factor loadings, $\widehat{\bB}$, using PCA. Obtain $\widehat{\bSigma}_{f}=\frac{1}{T}\sum_{t=1}^{T}\widehat{\bf}_{t}\widehat{\bf}_{t}^{'}$, $\widehat{\bTheta}_{f}=\widehat{\bSigma}_{f}^{-1}$, $\widehat{\bvarepsilon}_t = \be_t-\widehat{\bB}\widehat{\bf_t}$, and $\widehat{\bSigma}_{\varepsilon}=\frac{1}{T}\sum_{t=1}^{T}\widehat{\bvarepsilon}_{t}\widehat{\bvarepsilon}_{t}^{'}$.			
			\STATE Estimate a sparse $\bTheta_{\varepsilon}$ using the weighted Graphical LASSO in initialized with $\bW_{\varepsilon}=\widehat{\bSigma}_{\varepsilon}+\tau\bI$:
			\begin{align} \label{e7.6}
				&\widehat{\bTheta}_{\varepsilon,\tau}=\argmin_{\bTheta_{\varepsilon}=\bTheta'_{\varepsilon}}\text{tr}(\bW_{\varepsilon}\bTheta_{\varepsilon})-\log\det(\bTheta_{\varepsilon})+\tau\sum_{i\neq j}\widehat{\gamma}_{\varepsilon,ii}\widehat{\gamma}_{\varepsilon,jj}\abs{\theta_{\varepsilon,ij}}.
			\end{align}
			where $\widehat{\gamma}_{\varepsilon,ii}$ is the $(i,i)$-th element of $\widehat{\bGamma}_{\varepsilon}^2\defeq \textup{diag}(\bW_{\varepsilon})$.			
			\STATE Use $\widehat{\bTheta}_f$ from Step 1 and $\widehat{\bTheta}_{\varepsilon}$ from Step 2 to estimate $\bTheta$ using the Sherman-Morrison-Woodbury formula:
			\begin{equation}\label{3.11}
				\widehat{\bTheta}=\widehat{\bTheta}_{\varepsilon}-\widehat{\bTheta}_{\varepsilon}\widehat{\bB}\lbrack\widehat{\bTheta}_f+\widehat{\bB}'\widehat{\bTheta}_{\varepsilon}\widehat{\bB}\rbrack^{-1}\widehat{\bB}'\widehat{\bTheta}_{\varepsilon}.
			\end{equation}
		\end{algorithmic} 
	\end{algorithm}	
\end{spacing}
	\begin{spacing}{1.25}
Let $\widehat{\bTheta}_{\varepsilon,\tau}$ be the solution to \eqref{e7.6} for a fixed $\tau$. To choose the optimal shrinkage intensity coefficient, we minimize the following Bayesian Information Criterion (BIC) using grid search:
\begin{equation} \label{eq5.1}
	\text{BIC}(\tau) \defeq T\Big[\text{tr}(\widehat{\bTheta}_{\varepsilon,\tau}\widehat{\bSigma}_{\varepsilon})-\log\text{det}(\widehat{\bTheta}_{\varepsilon, \tau}) \Big] + (\log T)\sum_{i\leq j}\1{\widehat{\theta}_{\varepsilon,\tau,ij}\neq 0}.
\end{equation}
The grid $\mathcal{G}\defeq \{\tau_1,\ldots,\tau_M\}$ is constructed as follows: the maximum value in the grid, $\tau_{M}$, is set to be the smallest value for which all the off-diagonal entries of $\widehat{\bTheta}_{\varepsilon,\tau_{M}}$ are zero, that is, the maximum modulus of the off-diagonal entries of $\widehat{\bSigma}_{\varepsilon}$. The smallest value of the grid, $\tau_{1}\in \mathcal{G}$, is determined as $\tau_{1}\defeq\vartheta\tau_{M}$ for a constant $0<\vartheta<1$. The remaining grid values $\tau_1,\ldots,\tau_M$ are constructed in the ascending order from $\tau_{1}$ to $\tau_{M}$ on the log scale:
\begin{equation*}
	\tau_i=\exp \Big(\log(\tau_{1})+\frac{i-1}{M-1}\log(\tau_{M}/\tau_{1})  \Big), \quad i=2,\ldots,M-1.
\end{equation*}
We use $\vartheta= \sqrt{\log p/T} +1/\sqrt{p}$ (motivated by the convergence rate from Theorem \ref{theor1}) and $M=10$ in the simulations and the empirical exercise. 
\end{spacing}
	\section{Proof of Theorem 1} \label{appendixAtheor}
	We first present a lemma which is used in the proof.
	\begin{lem}\label{lemma1} 
		~
		\begin{enumerate}[label=(\alph*)]
			\item $\vertiii{\bTheta}_1=\mathcal{O}(d_T)$.
			\item $a \geq c>0$, where $a$ was defined in Section 3 and $c$ was defined in Assumption \textbf{(A.1)} (ii).
			\item $\abs{\widehat{a}-a}=\mathcal{O}_P(\varrho_{T}d_Ts_T)$, where $\widehat{a}$ was defined in Section 3.	
		\end{enumerate}
	\end{lem}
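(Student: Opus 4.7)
The three parts decouple nicely: (b) is essentially a restatement of an assumption, (c) is a quick consequence of the $\vertiii{\cdot}_1$ convergence rate already quoted from \cite{seregina2020FGL}, and (a) is the only part requiring a real argument about the population precision matrix. My plan is therefore to dispose of (b) and (c) first and concentrate the technical effort on (a), whose proof rests on the Sherman--Morrison--Woodbury decomposition combined with the pervasive factor assumption.

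\textbf{Part (b).} Immediate: by construction $a = \biota'_p\bTheta\biota_p/p$, and Assumption \ref{A1}(ii) states exactly $\biota'_p\bTheta\biota_p/p \ge c > 0$.

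\textbf{Part (a).} I would start from the identity \eqref{equa18}, $\bTheta = \bTheta_\varepsilon - \bTheta_\varepsilon\bB[\bTheta_f + \bB'\bTheta_\varepsilon\bB]^{-1}\bB'\bTheta_\varepsilon$, and bound the two summands separately. For the idiosyncratic term, the degree definition in \eqref{equ84} implies every column of $\bTheta_\varepsilon$ has at most $1+d_T$ nonzero entries, and these entries are uniformly bounded since $\norm{\bTheta_\varepsilon}_{\max}\le\vertiii{\bTheta_\varepsilon}_2 = \lambda_{\min}(\bSigma_\varepsilon)^{-1}=\mathcal{O}(1)$ under the non-spiked portion of Assumption \ref{A1}(i). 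Therefore $\vertiii{\bTheta_\varepsilon}_1 = \mathcal{O}(d_T)$. For the low-rank correction, Assumption \ref{A2} together with $\vertiii{\bTheta_\varepsilon}_2=\mathcal{O}(1)$ forces $\lambda_{\min}(\bB'\bTheta_\varepsilon\bB)\asymp p$, so the middle $q\times q$ inverse has entries of order $1/p$; combined with $\norm{\bB}_{\max}=\mathcal{O}(1)$ from Assumption \ref{C1} and the column-sparsity of $\bTheta_\varepsilon$, an entrywise bound on $\bTheta_\varepsilon\bB M\bB'\bTheta_\varepsilon$ shows that its column $\ell_1$-sums are absorbed by the $\mathcal{O}(d_T)$ rate of the leading term, giving (a).

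\textbf{Part (c).} Since $\widehat{a}-a = p^{-1}\biota'_p(\widehat{\bTheta}-\bTheta)\biota_p$, the plan is to apply the elementary bound $\abs{\biota'_p M\biota_p}\le\norm{\biota_p}_1\norm{M\biota_p}_\infty \le p\,\vertiii{M}_\infty = p\,\vertiii{M}_1$ (the last equality uses symmetry of $M=\widehat{\bTheta}-\bTheta$). This yields $\abs{\widehat{a}-a}\le\vertiii{\widehat{\bTheta}-\bTheta}_1$, and invoking the Factor GLASSO $\vertiii{\cdot}_1$-rate $\vertiii{\widehat{\bTheta}-\bTheta}_1=\mathcal{O}_P(\varrho_{1T}d_Ts_T)$ cited from \cite{seregina2020FGL} closes the proof.

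\textbf{Main obstacle.} The only genuine difficulty is in (a). A crude application of submultiplicativity $\vertiii{ABCDE}_1\le\prod\vertiii{\cdot}_1$ fails because $\vertiii{\bB}_1$ can scale like $p$; the correction term must be controlled entrywise, and the crucial input is that the middle $q\times q$ inverse contracts like $1/p$ thanks to pervasive factors. This is the step that prevents the low-rank piece from swamping the sparse $\mathcal{O}(d_T)$ bound and is where the model assumptions really do the work.
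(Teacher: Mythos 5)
Parts (b) and (c) of your proposal are correct and essentially identical to the paper's own argument: (b) is read directly off Assumption \ref{A1}(ii), and (c) is the same H\"{o}lder-type reduction $\abs{\widehat{a}-a}\leq \vertiii{\widehat{\bTheta}-\bTheta}_1$ followed by the $\vertiii{\cdot}_1$-rate cited from \cite{seregina2020FGL}.

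Part (a), however, contains a genuine gap, and it sits exactly where you located the ``main obstacle.'' Your bound on the sparse term, $\vertiii{\bTheta_\varepsilon}_1\leq (1+d_T)\norm{\bTheta_\varepsilon}_{\max}=\mathcal{O}(d_T)$, is valid but too loose, and the looseness poisons the low-rank correction because $\bTheta_\varepsilon$ enters the Sherman--Morrison--Woodbury correction \emph{twice}: any route, whether submultiplicativity or the entrywise computation you sketch, pays the $\vertiii{\bTheta_\varepsilon}_1$ price twice. Carrying out your entrywise plan with your stated ingredients, and writing $\bM \defeq \lbrack\bTheta_f+\bB'\bTheta_\varepsilon\bB\rbrack^{-1}$ with $\norm{\bM}_{\max}=\mathcal{O}(1/p)$: each column $\bTheta_{\varepsilon,\cdot j}$ has $\ell_1$-mass $\mathcal{O}(d_T)$, so $\norm{\bB'\bTheta_{\varepsilon,\cdot j}}_{\infty}=\mathcal{O}(d_T)$, hence $\norm{\bB \bM \bB'\bTheta_{\varepsilon,\cdot j}}_{\infty}=\mathcal{O}(d_T/p)$, and the final multiplication by the rows of $\bTheta_\varepsilon$ (each again of $\ell_1$-mass $\mathcal{O}(d_T)$) gives entries of order $d_T^2/p$ and column $\ell_1$-sums of order $d_T^2$ --- so the correction is \emph{not} absorbed by the $\mathcal{O}(d_T)$ leading term, contrary to your claim. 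The missing idea is the paper's inequality \eqref{a3}: for symmetric $\bA$, Cauchy--Schwarz applied to each (at most $(d(\bA)+1)$-sparse) column gives $\vertiii{\bA}_1\leq \sqrt{d(\bA)+1}\,\vertiii{\bA}_2$, which upgrades your bound to $\vertiii{\bTheta_\varepsilon}_1=\mathcal{O}(\sqrt{d_T})$. With this refinement the sandwich costs $\mathcal{O}(\sqrt{d_T}\cdot p \cdot p^{-1}\cdot \sqrt{d_T})=\mathcal{O}(d_T)$ --- the two appearances of $\bTheta_\varepsilon$ jointly contribute only $d_T$ --- and part (a) follows as in \eqref{a4}. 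The $\sqrt{d_T}$ refinement is not cosmetic: with your $\mathcal{O}(d_T^2)$ bound for $\vertiii{\bTheta}_1$, the weight rate in Theorem \ref{theor1}(i) would degrade to $\mathcal{O}_P(\varrho_{1T}d_T^3 s_T)$, weaker than what is claimed and proved.
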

	\begin{proof}
		~
		\begin{enumerate}[label=(\alph*)]
			\item To prove part (a) we use the following matrix inequality which holds for any $\bA \in \mathcal{S}_p$:
			\begin{equation}\label{a3}
				\vertiii{\bA}_1=\vertiii{\bA}_{\infty}\leq \sqrt{d(\bA)}\vertiii{\bA}_2,
			\end{equation}
			where $d(\bA)$ was defined in Section 4. The proof of \eqref{a3} is a straightforward consequence of the Schwarz inequality.
			
			Sherman-Morrison-Woodbury formula together with \eqref{a3} and Assumptions \textbf{(B.1)-(B.3)} yield:
			\begin{align}\label{a4}
				\vertiii{\bTheta}_1 &\leq \vertiii{\bTheta_{\varepsilon}}_1+\vertiii{\bTheta_{\varepsilon}\bB\lbrack\bTheta_{f}+\bB'\bTheta_{\varepsilon}\bB \rbrack^{-1}\bB'\bTheta_{\varepsilon}}_1 \nonumber\\
				&=\mathcal{O}(\sqrt{d_T})+\mathcal{O}\Big(\sqrt{d_T}\cdot p \cdot \frac{1}{p} \cdot \sqrt{d_T}\Big) = \mathcal{O}(d_T).
			\end{align}
			\item Under Assumption \textbf{(A.1)}:
			\begin{equation*}
				a=\biota'_p\bTheta\biota_p/p\geq  c>0.
			\end{equation*}
			\item Using the H\"{o}lders inequality, we have
			\begin{align*} 
				\abs{\widehat{a}-a}=\abs{\frac{\biota'_{p}(\widehat{\bTheta}-\bTheta)\biota_p}{p}}\leq \frac{\norm{(\widehat{\bTheta}-\bTheta)\biota_p }_1\norm{\biota_p}_{\infty}}{p} &\leq \vertiii{ \widehat{\bTheta}-\bTheta}_1 \\
				&=\mathcal{O}_P(\varrho_{T}d_Ts_T)=o_P(1),
			\end{align*}
			where the last rate is obtained using the assumptions of Theorem 1.
		\end{enumerate}		
	\end{proof}	
	\subsection{Proof of Theorem 1}
	First, note that the forecast combination weight can be written as
	\begin{align*}
		\widehat{\bw}-\bw &= \frac{\Big((a\widehat{\bTheta}\biota_p) - (\hat{a}\bTheta\biota_p)  \Big)/p}{\hat{a}a}\\
		&=\frac{ \Big((a\widehat{\bTheta}\biota_p) - (a\bTheta\biota_p) + (a\bTheta\biota_p) - (\hat{a}\bTheta\biota_p) \Big)/p  }{\hat{a}a}.
	\end{align*}
	As shown in Callot et al. (2019), the above can be rewritten as
	\begin{align}\label{a1}
		\norm{\widehat{\bw}-\bw}_1\leq \frac{a\frac{\norm{(\widehat{\bTheta}-\bTheta)\biota_p}_1 }{p}+\abs{a-\widehat{a} }\frac{\norm{\bTheta\biota_p}_1}{p}}{\abs{\widehat{a}}a}.
	\end{align}
	Prior to bounding the terms in \eqref{a1}, we first present an inequality which is used in the derivations. Let $\bA \in \mathbb{R}^{p\times p}$ and $\bv \in \mathbb{R}^{p\times1}$. Also, let $\bA_{j}$ and $\bA'_{j}$ be a $p \times 1$ and $1\times p$ row and column vectors in $\bA$, respectively.
	\begin{align}\label{a2}
		\norm{\bA\bv}_1 &= \abs{\bA'_{1}\bv}+\ldots+\abs{\bA'_{p}\bv} \leq \norm{\bA_1}_1\norm{\bv}_{\infty}+\ldots+\norm{\bA_p}_1\norm{\bv}_{\infty}\\
		&=\Bigg(\sum_{j=1}^{p} \norm{\bA_j}_1 \Bigg)\norm{\bv}_{\infty} \leq p \max_{j}\abs{\bA_{j}}_1\norm{\bv}_{\infty}.\nonumber
	\end{align}
	H\"{o}lders inequality was used to obtain each inequality in \eqref{a2}. If $\bA \in \mathcal{S}_{p}$, then the last expression can be further reduced to $p \vertiii{\bA}_1\norm{\bv}_{\infty}$.
	
	Let us now bound the right-hand side of \eqref{a1}. In the numerator we have:
	\begin{align}\label{a5}
		\frac{\norm{(\widehat{\bTheta}-\bTheta)\biota_p}_1 }{p} \leq \vertiii{\bTheta}_1=\mathcal{O}_P(\varrho_{T}d_Ts_T),
	\end{align}
	the rates was derived in Lee and Seregina (2020), and the inequality follows from \eqref{a2}.
	\begin{equation}\label{a6}
		\frac{\norm{\bTheta\biota_p}_1}{p}\leq \vertiii{\bTheta}_1=\mathcal{O}(d_T),
	\end{equation}
	where the rate follows from Lemma \ref{lemma1} (a) and the inequality is obtained from \eqref{a2}. Combining \eqref{a5}, \eqref{a6}, and Lemma \ref{lemma1} (c) we get:
	\begin{align}
		a\frac{\norm{(\widehat{\bTheta}-\bTheta)\biota_p}_1 }{p}+\abs{a-\widehat{a} }\frac{\norm{\bTheta\biota_p}_1}{p}&= \mathcal{O}(1)\cdot\mathcal{O}_P(\varrho_{T}d_Ts_T)+\mathcal{O}_P(\varrho_{T}d_Ts_T)\cdot \mathcal{O}(d_T) \nonumber
		\\&= \mathcal{O}_P(\varrho_{T}d_{T}^{2}s_T)=o_P(1),
	\end{align}
	where the last equality holds under the assumptions of Theorem 1.
	
	For the denominator of \eqref{a1} it easy to see that $\abs{\widehat{a}}a=\mathcal{O}_P(1)$ using the results of Lemma \ref{lemma1} (b).
	
	For the MSFE part of Theorem 1, using Lemma \ref{lemma1} (b)-(c), we get
	\begin{align*}
		\abs{\frac{\hat{a}^{-1}}{a^{-1}}-1}=\frac{\abs{a-\hat{a}}}{\abs{\hat{a}}} = \mathcal{O}_P(\varrho_{T}d_Ts_T)=o_P(1),
	\end{align*}
	where the last rate is obtained using the assumptions of Theorem 1.
	\cleardoublepage
	\section{Implementation via ADMM Algorithm}\label{appendix_admm}
	To enable practical implementation of the RD-FGL, we develop an optimization procedure using ADMM algorithm to solve the convex optimization problem in \eqref{eq2}.
	
	First, we need to reformulate the unconstrained problem in \eqref{eq2} as a constrained problem which can be solved using ADMM:
	\begin{align}
		\{\widehat{\bTheta}_{\varepsilon,j}\}_{j=1}^{N+1}=\argmin_{	\{\bTheta_{\varepsilon,j}\}_{j=1}^{N+1}}&\sum_{j=1}^{N+1}n_j\Big[\text{tr}\Big(\widehat{\bSigma}_{\varepsilon,j}\bTheta_{\varepsilon,j}\Big)-\log\det\bTheta_{\varepsilon,j}		\Big] 
		+\alpha\norm{\bTheta_{\varepsilon,j}}_{\text{od},1}\\
		&+\beta\sum_{j=2}^{N+1}\psi(\bTheta_{\varepsilon,j}-\bTheta_{\varepsilon,j-1}) \nonumber\\
		\text{s.t.} \ &\bZ_{j,0}=\bTheta_{\varepsilon,j}, \ \text{for}\ j=1,\ldots,N+1\\
		&\Big(\bZ_{j-1,1},\bZ_{j,2}\Big)=\Big(\bTheta_{\varepsilon,j-1}, \bTheta_{\varepsilon,j}\Big), \ \text{for}\ j=2,\ldots,N+1.
	\end{align}
	Let $\bZ=\Bigl\{\bZ_{0},\bZ_{1},\bZ_{2}\Bigr\}=\Bigl\{\Big(\bZ_{1,0},\ldots,\bZ_{N+1,0}\Big),\Big(\bZ_{1,1},\ldots,\bZ_{N,1}\Big),\Big(\bZ_{2,2},\ldots,\bZ_{N+1,2}\Big)\Bigr\}$.\\
	
	\noindent Let $\bU=\Bigl\{\bU_{0},\bU_{1},\bU_{2}\Bigr\}=\Bigl\{\Big(\bU_{1,0},\ldots,\bU_{N+1,0}\Big),\Big(\bU_{1,1},\ldots,\bU_{N,1}\Big),\Big(\bU_{2,2},\ldots,\bU_{N,2}\Big)\Bigr\}$ be the scaled dual variable and $\rho > 0$ is the ADMM penalty parameter. Now we can use scaled ADMM to write down the augmented Lagrangian:
	\begin{align}
		\mathcal{L}_{\rho}(\bTheta_{\varepsilon},\bZ,\bU)&=\sum_{j=1}^{N+1}n_j\Big[\text{tr}\Big(\widehat{\bSigma}_{\varepsilon,j}\bTheta_{\varepsilon,j}\Big)-\log\det\bTheta_{\varepsilon,j}		\Big] 
		+\alpha\norm{\bZ_{j,0}}_{od,1}\\
		&+\beta\sum_{j=2}^{N+1}\psi(\bZ_{j,2}-\bZ_{j-1,1}) \nonumber\\
		&+\Big(\frac{\rho}{2}\Big)\sum_{j=1}^{N+1}\Bigg(\norm{\bTheta_{\varepsilon,j}-\bZ_{j,0}+\bU_{j,0}}_{F}^2-\norm{\bU_{j,0}}_F^2 \Bigg)\nonumber\\
		&+\Big(\frac{\rho}{2}\Big)\sum_{j=2}^{N+1}\Bigg(\norm{\bTheta_{\varepsilon,j-1}-\bZ_{j-1,1}+\bU_{j-1,1}}_{F}^2-\frac{\rho}{2}\norm{\bU_{j-1,1}}_F^2\nonumber\\
		&+\norm{\bTheta_{\varepsilon,j}-\bZ_{j,2}+\bU_{j,2}}_{F}^2-\norm{\bU_{j,2}}_F^2\Bigg).\nonumber
	\end{align}
	Let $k$ denote the iteration number, then ADMM consists of the following iterative updates:
	\begin{align}
		&\bTheta^{k+1}_{\varepsilon,j}\defeq \argmin_{\bTheta_{\varepsilon}\succ 0}\mathcal{L}_{\rho}(\bTheta_{\varepsilon},\bZ^{k},\bU^{k}), \label{eq43a}\\
		&\bZ^{k+1}=\begin{bmatrix}
			\bZ_{0}^{k+1}\\\bZ_{1}^{k+1}\\\bZ_{2}^{k+1}
		\end{bmatrix}\defeq \argmin_{\bZ_{0},\bZ_{1},\bZ_{2}}\mathcal{L}_{\rho}(\bTheta^{k+1}_{\varepsilon},\bZ,\bU^{k}),\label{eq44}\\
		&\bU^{k+1}=\begin{bmatrix}
			\bU_{0}^{k+1}\\\bU_{1}^{k+1}\\\bU_{2}^{k+1}
		\end{bmatrix}\defeq \begin{bmatrix}
			\bU_{0}^{k}\\\bU_{1}^{k}\\\bU_{2}^{k}
		\end{bmatrix}+\begin{bmatrix}
			\bTheta^{k+1}_{\varepsilon}-\bZ_{0}^{k+1}\\(\bTheta_{\varepsilon,1}^{k+1},\ldots,\bTheta_{\varepsilon,N}^{k+1})-\bZ_{1}^{k+1}\\(\bTheta_{\varepsilon,2}^{k+1},\ldots,\bTheta_{\varepsilon,N+1}^{k+1})-\bZ_{2}^{k+1}
		\end{bmatrix}.\label{eq45}
	\end{align}
	\newline \textbf{The $\bZ$ step:}\\
	The updating rule in \eqref{eq44} is easily recognized to be the element-wise soft thresholding operator. However, we need to split it into two updates since $(\bZ_{1},\bZ_{2})$ have to be updated jointly. Therefore, the update for $\bZ_{j,0}^{k+1}$ will be:
	\begin{align}
		&\bZ_{j,0}^{k+1}\defeq S_{\alpha/\rho}(\bTheta_{\varepsilon,j}^{k+1}+\bU_{j,0}^{k}),
	\end{align}
	where $S_{\alpha/\rho}(\cdot)$ is the element-wise soft-thresholding operator.\\
	
	\noindent We will solve a separate update for each $(\bZ_{j,2},\bZ_{j-1,1})$ pair for $j=2,\ldots,N+1$:
	\begin{align}\label{eq47}
		(\bZ_{j,2}^{k+1},\bZ_{j-1,1}^{k+1})&=\argmin_{\bZ_{j,2},\bZ_{j-1,1}}\Big(\frac{\rho}{2}\Big)\Bigg(\norm{\bTheta_{\varepsilon,j}-\bZ_{j,2}+\bU_{j,2}}_{F}^2 \\
		&+\norm{\bTheta_{\varepsilon,j-1}-\bZ_{j-1,1}+\bU_{j-1,1}}_{F}^2+\beta\psi(\bZ_{j,2}-\bZ_{j-1,1})\Bigg). \nonumber
	\end{align}
	Note that \eqref{eq47} is guaranteed to converge to a fixed point since it can be written as a proximal operator:
	\begin{equation} \label{eq48}
		(\bZ_{j,2}^{k+1},\bZ_{j-1,1}^{k+1})=\text{prox}_{\frac{\beta}{\rho}\psi(\cdot)}\Big(\bTheta_{\varepsilon,j}+\bU_{j,2},\bTheta_{\varepsilon,j-1}+\bU_{j-1,1} \Big)
	\end{equation}
	\begin{remark}
		A proximal operator of the scaled function $\nu f$, where $\nu>0$ can be expressed as:
		\begin{equation}\nonumber
			\text{prox}_{\nu f}(v)=\argmin_{x}\Big(	f(x)+\frac{1}{2\nu}\norm{x-v}^2_2\Big),
		\end{equation}
		where $f$ is a closed proper convex function. Note:
		\begin{equation}\nonumber
			\text{prox}_{\nu f}(v)\approx v-\tau \nabla f(v).
		\end{equation}
		\cite{BoydProximalAlgorithms} show that the fixed points of the proximal operator of $f$ are precisely the minimizers of $f$, i.e., $\text{
			prox}_{\nu f}(x^{\star})=x^{\star}$ if and only if $x^{\star}$ minimizes $f$.
	\end{remark}
	\noindent \textbf{The $\bTheta$ step:}\\
	The updating rule in \eqref{eq43a} can be further simplified to obtain a closed-form solution. Rewrite \eqref{eq43a}:
	\begin{align} \label{eq51}
		&\bTheta^{k+1}_{\varepsilon,j}=\argmin_{\Theta_{\varepsilon} \succ 0}\text{tr}\Big(\widehat{\bSigma}_j\bTheta_{\varepsilon,j}\Big)-\log\det\bTheta_{\varepsilon,j}+\frac{1}{2\eta}\norm{\bTheta_{\varepsilon,j}-\bA^{k}}_{F}^{2},
	\end{align}
	where $\bA^{k}=\dfrac{\bZ_{i,0}^{k}+\bZ_{j-1,1}^{k}+\bZ_{j,2}^{k}-\bU_{j,0}^{k}-\bU_{j-1,1}^{k}-\bU_{j,2}^{k}}{3}$, and $\eta=\frac{n_j}{3\rho}$.\\
	
	\noindent Take the gradient of the updating rule in \eqref{eq51} in order to get an analytical solution:
	\begin{align} 
		&\widehat{\bSigma}_{\varepsilon,j}-\bTheta^{-1,(k+1)}_{\varepsilon,j}+\frac{1}{\eta}\Big(\bTheta^{k+1}_{\varepsilon,j}-\bA^k	\Big)=0,\label{eq52}\\
		&\frac{1}{\eta}\bTheta^{k+1}_{\varepsilon,j}-\bTheta^{-1,(k+1)}_{\varepsilon}=\frac{1}{\eta}\bA^k-\widehat{\bSigma}_{\varepsilon,j}. \label{eq53}
	\end{align}
	Equation \eqref{eq53} implies that $\bTheta^{k+1}_{\varepsilon,j}$ and $\frac{1}{\eta}\bA^k-\widehat{\bSigma}_{\varepsilon,j}$ share the same eigenvectors.\\
	
	\noindent  Let $\bQ_j\bLambda_j \bQ_j'$ be the eigendecomposition of  $\frac{1}{\eta}\bA^k-\widehat{\bSigma}_{\varepsilon,j}$, where $\bLambda_j=\text{diag}(\lambda_{1,j},\ldots,\lambda_{p,j})$, and $\bQ_j'\bQ_j=\bQ_j\bQ_j'=\bI$.\footnote{Note that in practice we need to check that $\bA^k$ is symmetric. If it is not, then we can define $\widetilde{\bA}\defeq \frac{\bA^{k}+(\bA^{k})'}{2}$ and use it in the described algorithm instead of $\bA^k$. Since $\bTheta_j$ is symmetric the results will not be affected.} Pre-multiply \eqref{eq53} by $\bQ_j'$ and post-multiply it by $\bQ_j$:
	\begin{equation}
		\frac{1}{\eta} \widetilde{\bTheta}^{k+1}_{\varepsilon,j}-\widetilde{\bTheta}_{\varepsilon,j}^{-1,(k+1)}=\bLambda_j. \label{eq54}
	\end{equation}
	Now construct a diagonal solution of \eqref{eq54}:
	\begin{align}
		&\frac{1}{\eta}\tilde{v}_{i,j}-\dfrac{1}{\tilde{v}_{i,j}}=\lambda_{i,j},
	\end{align}
	where $\tilde{v}_{i,j}$ denotes the $i$-th eigenvalue of $\widetilde{\bTheta}_{\varepsilon,j}$. Solving for $\tilde{v}_{i,j}$ we get:
	\begin{equation}
		\tilde{v}_{i,j}=\dfrac{\lambda_{i,j}+\sqrt{\lambda_{i,j}^2+\frac{4}{\eta}}}{2\eta^{-1}}.
	\end{equation}
	Now we can calculate $\bTheta^{k+1}_{\varepsilon,j}$ which satisfies the optimality condition in \eqref{eq54}:
	\begin{equation}
		\bTheta^{k+1}_{\varepsilon,j}=\frac{1}{2\eta^{-1}}\bQ_j\Big(\bLambda_j+\sqrt{\bLambda_j^2+4\eta^{-1}\bI}\Big)\bQ_j'.
	\end{equation}
	Use the definition of $\eta=\frac{n_j}{3\rho}$:
	\begin{equation}\label{e97}
		\bTheta^{k+1}_{\varepsilon,j}=\frac{n_j}{6\rho}\bQ_j\Big(\bLambda_j+\sqrt{\bLambda_j^2+\frac{12\rho}{n_j}\bI}\Big)\bQ_j'.
	\end{equation}
	Step \eqref{e97} is the most computationally intensive task in the algorithm since the runtime of decomposing a $p \times p$ matrix is $\mathcal{O}(p^3)$. Also, note that compared to standard ADMM without smoothing penalty $\beta$, \eqref{e97} enforces stronger shrinkage. This is consistent with our motivation for the additional constraint - to smooth the estimator of precision matrix.
\newpage 	
\section{Additional Simulations} \label{additional_sims1}
	\begin{spacing}{1.25}
\subsection{No Break}
In this section we present simulation results that augment the results in Section 5 by assuming there is no break in the DGP.

First, we explore behavior of precision matrix and weights estimates. The setup is the same as in Subsection 5.1, but there is no break in $\bTheta_{\varepsilon}$: the value of the smallest coefficient is set to 0.1 and the value of the largest coefficient is set to 0.3.

Figure \ref{f18} shows the averaged (over Monte Carlo simulations) errors of the estimators of the precision matrix $\bTheta$ and the optimal combination weight versus the sample size $T$ in the logarithmic scale (base 2). For comparison, we include RD-FGL in all simulations. Since there is no break in the DGP, the tuning parameter for the factor loadings $\gamma = 1$ and the value of $\beta$ is estimated to be zero. Henceforth, all specifications of the smoothing function $\psi(\cdot)$ yield similar results and we only include one of them RD-FGL ($\ell_2$). As evidenced by Figure \ref{f18}, FGL and RD-FGL demonstrate superior performance over EW and non-factor based model (GL). FGL and RD-FGL have comparable performance, but since there is no break in DGP, FGL is more efficient. Furthermore, FGL and RD-FGL achieve lower estimation error in the combination weights, which leads to lower risk of the combined forecast. Also, note that the precision matrix estimated using the EW method also shows good convergence properties.
\begin{figure}[!htbp]
	\centering
	\includegraphics[width=0.95\textwidth]{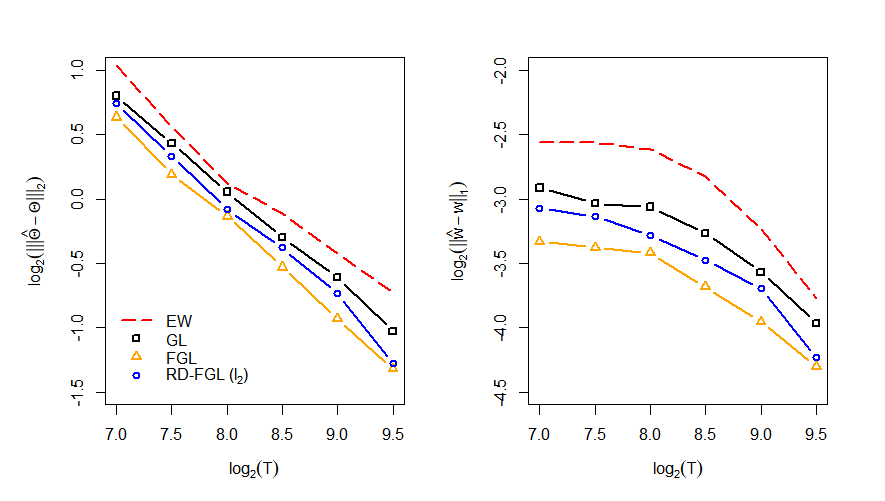} 
	\bigskip
	\caption{\textbf{Averaged errors of the estimators of $\bTheta$ (left) and $\bw$ on logarithmic scale (base 2). $p = T^{0.85}$, $q = 2(\log(T))^{0.5}$, $s_T = \mathcal{O}(T^{0.05})$.}}
	\label{f18}
\end{figure}

Second, we explore behavior of MSFE under no breaks. We set $c_1\in\{0,0.75\}$ and $c_2 = 0.9$. We set $M=100$ and generate $r=5$ factors. To create $\bLambda$ in \eqref{e39} we take the first $r$ rows of an upper triangular matrix from a Cholesky decomposition of the $M \times M$ Toeplitz matrix parameterized by $\rho = 0.9$. The ranking of competing models was not very sensitive to varying values of $\phi$, $\rho$, $c_2$, and $r$ -- the results examining sensitivity to a grid of 10 different AR(1) coefficients $\phi$ equidistant between $0$ and $0.9$, a grid of 10 different values of $\rho$ equidistant between $0$ and $0.9$,  $c_2 \in \{0.6, 0.7, 0.8, 0.9\}$, and $r \in  \{1,\cdots,7\}$ are available upon request.

One-step ahead forecasts are estimated from the factor-augmented autoregressive (FAR) models of orders $k,l$, denoted as  FAR($k,l$), defined in \eqref{e42}. We consider the FAR models of various orders, with $k=1,\ldots,K$ and $l=1,\ldots,L$.
We also consider the models without any lagged $y$ or any factors. Therefore, the total number of forecasting models is $p \equiv (1+K)\times(1+L)$, which includes the forecasting models using naive average or no factors. We set $K=2$ and $L=7$.

The total number of observations is $m$. The period for training the models is set to be $m_1 = m/2$ -- this is used to train competing FAR models in \eqref{e42}. The remaining part of the sample, $m_2 = m-m_1$ is split as follows: the estimation window for training competing models (that is, EW, GL, FGL, and RD-FGL) is set to be window $= m_2/2$. We roll the estimation window over the the test sample of the size $m_2/2$ to update all the estimates in each point of time. Recall that $q$ denotes the number of factors in the forecast errors as in equation (\ref{equ1}).

Similarly to the previous subsection, we include RD-FGL in all simulations. When there is no break in the DGP, the tuning parameter for the factor loadings, $\gamma$, is set to one, and the penalty that controls the change of idiosyncratic precision matrix over time, $\beta$, is zero. Figure \ref{fig2} shows the MSFE for different sample sizes and fixed parameters: we report the results for two values of $c_1\in\{0,0.75\}$. As evidenced from Figure \ref{fig2}, the models that use the factor structure outperform EW combination and non-factor based counterparts for both values of $c_1$.
\begin{figure}[!htb]
	\centering
	\includegraphics[width=\textwidth]{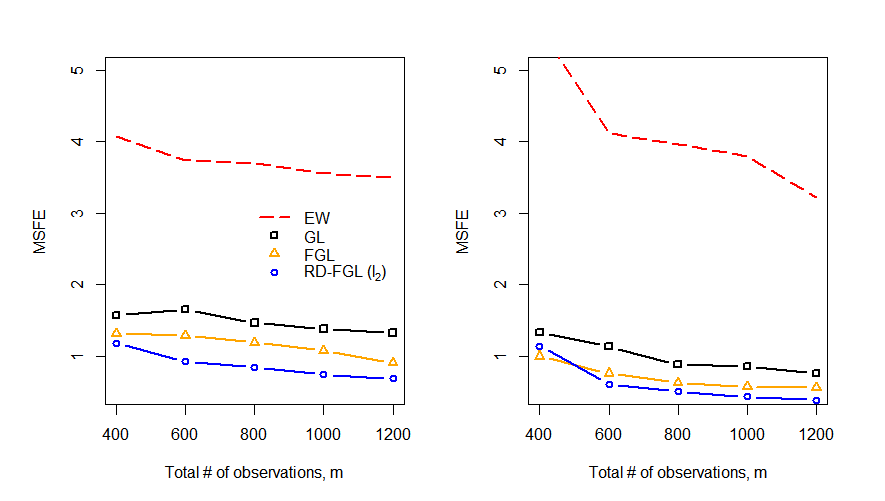} 
	\bigskip
	\caption{\textbf{Plots of the MSFE over the total number of observations \textit{m}}. $c_1=0$ (left), $c_1=0.75$ (right), $c_2=0.9, \ M=100, \ r=5, \sigma_\xi=1,\ L=7,\ K=2,\ p=24,\ q=3, \ \rho=0.9,\ \phi=0.8$.}
	\label{fig2}
\end{figure}
\subsection{Break Only in Idiosyncratic Precision Matrix}
This section presents the results for the case when there is a single break in $\bTheta_{\varepsilon}$. The DGP is the same as described in Subsection 5.1: the break point is fixed in the middle of the sample $T/2$. Before the break, the value of the largest coefficient in $\bTheta_{\varepsilon,1}$ is set to 0.4, after the break it changes to 0.6.

Figure \ref{fig1_rd} shows the averaged (over Monte Carlo simulations) errors of the estimators of the precision matrix $\bTheta$ and the optimal combination weight versus the sample size $T$ in the logarithmic scale (base 2). The estimate of the precision matrix of the EW forecast combination is obtained using the fact that diagonal covariance and precision matrices imply equal weights. To determine the values of the diagonal elements we use the shrinkage intensity coefficient calculated as the average of the eigenvalues of the sample covariance matrix of the forecast errors (see \cite{Ledoit2004}).

Figure \ref{fig1_rd} shows the performance of all models including RD-FGL when the break is only in $\bTheta_{\varepsilon}$ ($\gamma = 1$): accounting for the break significantly reduces the estimation error of precision matrix and combination weights.
\begin{figure}[!h]
	\centering
	\includegraphics[width=\textwidth]{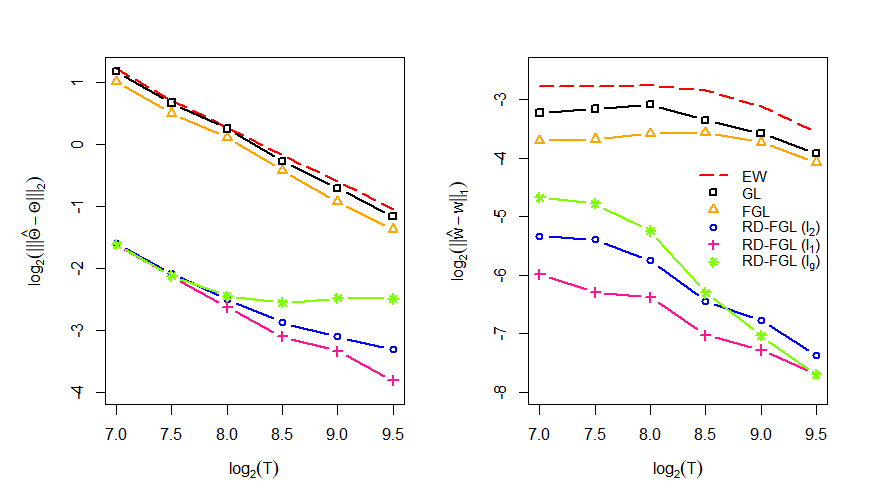}  
	\bigskip
	\caption{\textbf{Averaged errors of the estimators of $\bTheta$ (left) and $\bw$ on logarithmic scale (base 2): break in $\bTheta_{\varepsilon}$. $p = T^{0.85}$, $q = 2(\log(T))^{0.5}$, $s_T = \mathcal{O}(T^{0.05})$.}}
	\label{fig1_rd}
\end{figure}
\subsection{Multiple Breaks}
We examine the performance of RD-FGL and competing methods for the case of two known breaks.

First, we explore behavior of precision matrix and weights estimates. To incorporate two structural breaks in $\bTheta_{\varepsilon}$, we add the following modification to the DGP setup in Subsection 5.1. We fix two break points: one at $t_1=T/4$ and the other at $t_1=3T/4$. Define the following idiosyncratic precision matrices: $\bTheta_{\varepsilon,1}$ before $t_1$, $\bTheta_{\varepsilon,2}$ between $t_1$ and $t_2$, and $\bTheta_{\varepsilon,3}$ after $t_2$. The value of the largest coefficient in the three aforementioned matrices is set to 0.2, 0.4, and 0.6, accordingly.

\begin{figure}[!h]
	\centering
	\includegraphics[width=\textwidth]{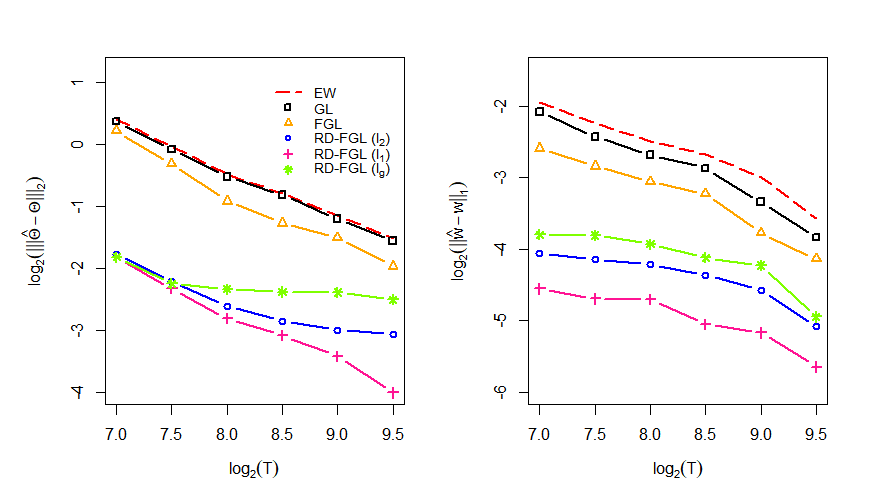} 
	\bigskip
	\caption{\textbf{Averaged errors of the estimators of $\bTheta$ (left) and $\bw$ on logarithmic scale (base 2): two breaks in $\bTheta_{\varepsilon}$. $p = T^{0.85}$, $q = 2(\log(T))^{0.5}$, $s_T = \mathcal{O}(T^{0.05})$.}}
	\label{fig1_app}
\end{figure}

As demonstrated in Figure \ref{fig1_app}, similarly to the findings in the main manuscript for the case with one break, accounting for the break significantly reduces the estimation error of precision matrix and combination weights.

Second, we explore behavior of MSFE under two breaks. To incorporate two structural breaks we add the following modification to the DGP in Subsection 5.2. The total number of observations is $m$. The period for training the models is set to be $m_1 = T/3$ -- this is used to train competing FAR models in \eqref{e42}. The remaining part of the sample, $m_2 = m-m_1$ is split similarly to Subsection 5.2: the estimation window for training competing models is set to be window $= m_2/2$. We roll the estimation window over the the test sample. The break points are fixed at 1/3 and 3/4 of the first estimation window, and will be referred to as $t_1$ and $t_2$.

When generating $\theta_{s}$ we set $c_2$ as follows: $c_2=0.3$ before $t_1$, $c_2=0.6$ between $t_1$ and $t_2$, $c_3=0.9$ after $t_2$.

Before the break, when generating $\theta_{s}$ we set $c_2 = 0.3$, and after the break $c_2 = 0.9$. All other parameters stay unchanged. Notice that the break in $c_2$ can propagate into both a break in precision matrix and factor loadings.

Similarly to the main manuscript, we include different specifications of the smoothing function $\psi(\cdot)$. Figure \ref{fig2_app} shows the performance of all models including RD-FGL with $\gamma$ estimated using cross-validation: similarly to the conclusions in Subsection 5.2, accounting for the break significantly reduces MSFE of the combined forecast.

\begin{figure}[!htb]
	\centering
	\includegraphics[width=0.55\textwidth]{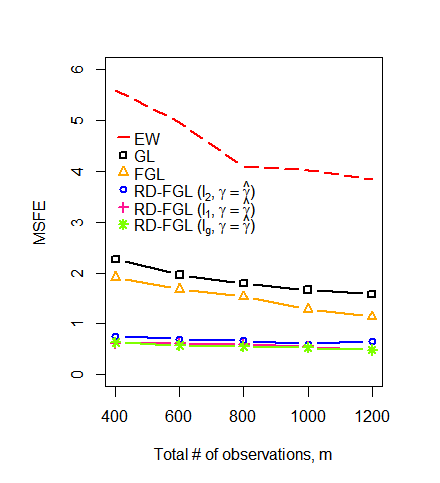} 
	\bigskip
	\caption{\textbf{Plots of the MSFE over the total number of observations \textit{m}}. $c_1=0.75$,  $c_2=0.3$ (before $t_1$), $c_2=0.6$ (between $t_1$ and $t_2$), $c_3=0.9$ (after $t_2$), $ M=100, \ r=5, \sigma_\xi=1,\ L=7,\ K=2,\ p=24,\ q=3, \ \rho=0.9,\ \phi=0.8$.}
	\label{fig2_app}
\end{figure}

\subsection{Varying Break Magnitude}
We examine the performance of RD-FGL and competing methods for the case of one known break of smaller magnitude.

First, we explore behavior of precision matrix and weights estimates. The setup is the same as in Subsection 5.1: we fix a single break point in the middle of the sample size, $T/2$: in the precision matrix of the idiosyncratic errors before the break, referred to as $\bTheta_{\varepsilon,1}$, the value of the largest coefficient is set to 0.4; whereas in the precision matrix of the idiosyncratic errors after the break, $\bTheta_{\varepsilon,2}$, the value of the largest coefficient is set to 0.45. We use $\bTheta_{\varepsilon,1}$ and $\bTheta_{\varepsilon,2}$ to generate $\bvarepsilon_t$ in \eqref{e51}.

\begin{figure}[!h]
	\centering
	\includegraphics[width=\textwidth]{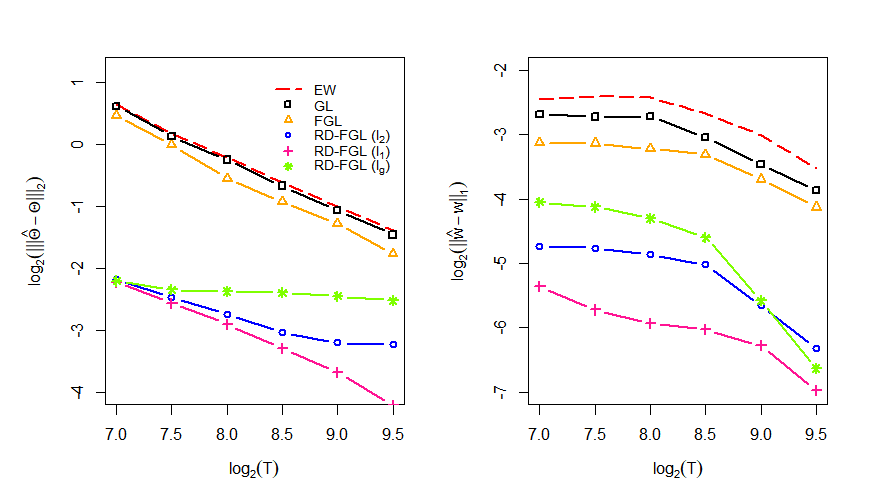} 
	\bigskip
	\caption{\textbf{Averaged errors of the estimators of $\bTheta$ (left) and $\bw$ on logarithmic scale (base 2): one break in $\bTheta_{\varepsilon}$. $p = T^{0.85}$, $q = 2(\log(T))^{0.5}$, $s_T = \mathcal{O}(T^{0.05})$.}}
	\label{fig3_app}
\end{figure}

As demonstrated in Figure \ref{fig3_app}, similarly to the findings in the main manuscript, accounting for the break significantly reduces the estimation error of precision matrix and combination weights even if the break magnitude is small.

Second, we explore behavior of MSFE for smaller break magnitude. The setup is the same as in Subsection 5.1:  the period for training the models is set to be $m_1 = m/3$ -- this is used to train competing FAR models in \eqref{e42}. The remaining part of the sample, $m_2 = m-m_1$ is split as follows: the estimation window for training competing models is set to be window $= m_2/2$. We roll the estimation window over the test sample of the size $m_2/2$. The break point is fixed at 1/2 of the first estimation window. Before the break, when generating $\theta_{s}$ we set $c_2 = 0.3$, and after the break $c_2 = 0.4$. All other parameters stay unchanged.

Similarly to the main manuscript, we include different specifications of the smoothing function $\psi(\cdot)$. Figure \ref{fig4_app} shows the performance of all models including RD-FGL with $\gamma$ estimated using cross-validation: similarly to the conclusions in Subsection 5.2, accounting for the break significantly reduces MSFE of the combined forecast even if the break magnitude is small.

\begin{figure}[!htb]
	\centering
	\includegraphics[width=0.55\textwidth]{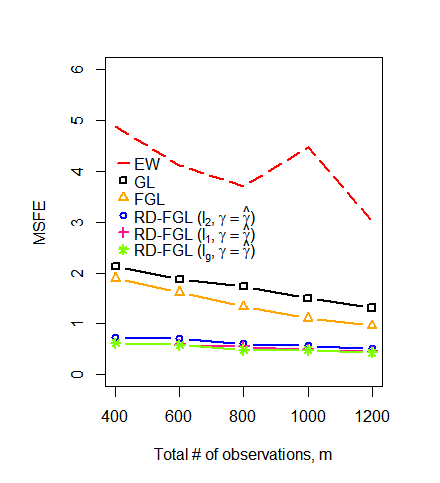} 
	\bigskip
	\caption{\textbf{Plots of the MSFE over the total number of observations \textit{m}}. $c_1=0.75$,  $c_2=0.3$ (before the break), $c_2=0.4$ (after the break), $ M=100, \ r=5, \sigma_\xi=1,\ L=7,\ K=2,\ p=24,\ q=3, \ \rho=0.9,\ \phi=0.8$.}
	\label{fig4_app}
\end{figure}
\end{spacing}
\end{appendices}
\end{document}